\providecommand{\tabularnewline}{\\}
\providecommand{\algorithmname}{Algorithm}
  \theoremstyle{definition}
  \newtheorem{defn}{\protect\definitionname}[section]
\theoremstyle{plain}
\newtheorem{thm}[defn]{\protect\theoremname}
  \theoremstyle{plain}
  \newtheorem{lem}[defn]{\protect\lemmaname}
  \theoremstyle{plain}
  \newtheorem{cor}[defn]{\protect\corollaryname}
\numberwithin{equation}{section} 
 \providecommand{\definitionname}{Definition}
 \providecommand{\lemmaname}{Lemma}
\providecommand{\corollaryname}{Corollary}
\providecommand{\theoremname}{Theorem}
\begin{document}

\title{Super-resolution on the Sphere using Convex Optimization}
\author{Tamir~Bendory,~
        Shai~Dekel,
        and~Arie~Feuer~\IEEEmembership{,Fellow,~IEEE} \\
        Revised: October 2014, January 2015}

\maketitle

\begin{abstract}
This paper considers the problem of recovering an ensemble of
Diracs on a sphere from its low resolution measurements. The Diracs can be located at any location on the sphere, not necessarily on a grid.
 We show that under a separation condition, one can recover the ensemble with high precision by a three-stage algorithm, which consists of solving a semi-definite program, root finding and least-square fitting. The algorithm's computation time depends solely on the number of measurements, and not on the required solution accuracy. 
  We also show that in the special case of non-negative ensembles, a sparsity condition is sufficient for recovery.
  Furthermore, in the discrete setting,  we estimate the recovery error in the presence of noise as a function of the noise level and the super-resolution factor. 
\end{abstract}

\section{Introduction}

In many cases, signals are observed on spherical manifolds. Typical
examples are astrophysics (e.g. \cite{komatsu2011seven,audet2011directional}),
gravity fields sensing \cite{klosko1982spherical} and geophysics
\cite{simons2006spatiospectral}. A further example is spherical microphone
arrays, used for spatial beam forming \cite{meyer2001beamforming}, sound recording \cite{meyer2003spherical} and acoustic source localization \cite{jarrett20103d}.

Spherical harmonics are a key tool for the analysis of signals on
the sphere. For instance, the spherical microphone array was analyzed
in terms of spherical harmonics in \cite{rafaely2005analysis}. Additionally,
spherical harmonics have been extensively used for various applications
in computer graphics, such as modeling of volumetric scattering effects,
bidirectional reflectance distribution function, and atmospheric scattering
(for more graphical applications, see \cite{sloan2008stupid} and
the references therein). Spherical harmonics are also used in medical
imaging \cite{taguchi2001cone,tournier2004direct,deslauriers2012spherical}, optical tomography
\cite{arridge1999optical}, wireless channel modeling \cite{pollock2003introducing}
and several applications in physics such as solving potential problem
in electrostatics \cite{macrobert1967spherical}, and the central
potential Schrodinger equation in quantum mechanics \cite{cohen2006quantum}.
Based on spherical harmonics analysis, new sampling theorems on the
sphere for band-limited signals \cite{mcewen2011novel,ben2012generalized} and  for signals with finite rate of innovation \cite{deslauriers2013sampling} were suggested, and advanced analysis methods on the sphere were applied \cite{khalid2013spatially,leistedt2012exact}.

Let $\mathcal{H}_{n}({\mathbb{S}^{d-1}})$ denote the space of homogeneous
spherical harmonics of degree $n$, which is the restriction to the
$(d-1)$ unit sphere of the homogeneous harmonic polynomials of degree
$n$ in $\mathbb{R}^{d}$ \cite{atkinson2012spherical}. Each subspace
$\mathcal{H}_{n}({\mathbb{S}^{d-1}})$ is of dimension 
\[
a_{n,d}:=\frac{(2n+d-2)(n+d-3)!}{n!(d-2)!},\quad n\in\mathbb{N},d\geq2.
\]

Let us denote by $\{Y_{n,k}\}$, $k=1,...,a_{n,d}$, an orthonormal
basis of $\mathcal{H}_{n}({\mathbb{S}^{d-1}})$. The set $\{Y_{n,k}\}$
is a basis for the space of square integrable functions on $\mathbb{S}^{d-1}$.
Consequently, any $f\in L_{2}(\mathbb{S}^{d-1})$ can be expanded
as 
\begin{align}
f & =\sum_{n=0}^{\infty}\sum_{k=1}^{a_{n,d}}\langle f,Y_{n,k}\rangle Y_{n,k}.\label{eq:f_exp}
\end{align}
For $d=2$ the appropriate spherical harmonic basis is simply the
standard Fourier basis $\left\{ e^{jn\phi}\right\} $.

In this work we focus on the two-dimensional sphere $\mathbb{S}^{2}$
embedded in $\mathbb{R}^{3}.$ In this case, any point on the sphere
is parametrized by $\xi:=\left(\phi,\theta\right)\in\left[0,2\pi\right)\times\left[0,\pi\right].$
The appropriate orthonormal spherical harmonics basis is given by%
\footnote{Note that $k$ has a different range here than in (\ref{eq:f_exp}). %
} 
\begin{equation*}
Y_{n,k}\left(\xi\right)=A_{n,k}e^{jk\phi}P_{n,k}\left(\cos\theta\right),\quad0\le n<\infty,\quad-n\le k\le n,\label{eq:SH_2d}
\end{equation*}
where $P_{n,k}\left(x\right)$ is an \emph{associated Legendre polynomial
}of degree $n$ and order $k$, and $A_{n,k}$ is a normalization
factor, given by 
\begin{equation}
A_{n,k}:=\sqrt{\frac{2n+1}{4\pi}\frac{\left(n-\left|k\right|\right)!}{\left(n+\left|k\right|\right)!}}.\label{eq:A}
\end{equation}
The distance on the sphere between any two points $\xi_{i},\xi_{j}\in\mathbb{S}^{2}$
is given by 
\[
d(\xi_{i},\xi_{j})=\arccos\left(\xi_{i}\cdot\xi_{j}\right).
\] 
Consider a Dirac ensemble on the bivariate sphere $\mathbb{S}^{2}$
\begin{equation}
f=\sum_{m}c_{m}\delta_{\xi_{m}},\label{eq:signal}
\end{equation}
where $\delta_{\xi}$ is a Dirac measure, $\left\{ c_{m} \right\}$ are real weights,
and $\Xi:=\left\{ \xi_{m}\right\} \subset\mathbb{S}^{2}$ are distinct
locations on the sphere, namely the signal support.
Let us denote by $V_{N}$ the space of spherical harmonics of degree
$\leq N$. We assume that the only information we have on the signal
$f$ is its 'orthogonal projection' onto $V_{N}$, i.e. 
\begin{equation}
y_{n,k}:=\langle f,Y_{n,k}\rangle=\sum_{m}c_{m}\overline{Y}_{n,k}(\xi_{m}),\quad0\le n\le N,\quad-n\le k\le n.\label{eq:meas}
\end{equation}
In matrix notations, (\ref{eq:meas}) is presented as 
\begin{equation}
y=F_{N}f,\label{eq:matrix}
\end{equation}
where $F_{N}$ is a semi-infinite matrix with $\left(N+1\right)^{2}$
rows and $y$ is a column stacked vector of $\left\{ y_{n,k}\right\} $.
That is to say, $F_{N}$ is a projection operator onto $V_{N}$. The
adjoint operator is denoted as $F_{N}^{*}$. Our first main contribution (see Theorem \ref{Th:main}) is an algorithm that recovers exactly the underlying signal from its projection onto $V_{N}$.

To be clear, we assume that the high spherical harmonic coefficients are annihilated before any sampling procedure occurs.
In the spatial domain, the projection onto $V_N$ can be computed by approximately $2(N+1)^2$ samples based on a stable equiangular sampling scheme on the sphere \cite{mcewen2011novel}. A recent work derives an accurate computation of the projection using only $\left(N+1\right)^2$ samples \cite{khalid2014optimal}.

As a special case of the analog model, we also define a discrete configuration where the signal is known to lie on a grid.
Consider a discrete signal on the sphere 
\begin{equation}
f=\sum_{m}c_{m}\delta_{\xi_{m}},\quad\xi_{m}\in\Xi\subset\mathbb{S}_{L}^{2},\label{eq:signal-dis}
\end{equation}
where $\mathbb{S}_{L}^{2}$ is a predefined grid, not necessarily uniform.  
We assume that  any pair of points on the grid $\xi_i,\xi_j\in \mathbb{S}_{L}^{2}$ obey $d(\xi_i,\xi_j)\geq 1/L$ for some  $L\geq 1/\pi$.
This measurements model is equivalent
to 
\[
y=F_{N}^{L}f,
\]
where $F_{N}^{L}$ is the spherical harmonics matrix. This model will serve as the basis for our main result on recovery in noisy setting (see Theorem \ref{Theorem2}).
For the discrete model, {we define the notion of super-resolution factor} ($SRF$) (see also \cite{candes2013towards}).
SRF is  defined as 
\begin{equation} \label{eq:SRF}
SRF:=\frac{L}{N},
\end{equation}
and represents the ratio between the desired and the measured resolutions.
This agrees with the analog model (\ref{eq:signal}) when $SRF\rightarrow\infty.$

Our model reflects the fact that sensing systems have a physical
limit, determining the highest resolution the system can achieve.
In these cases, the observer has access solely to a coarse scale measurements
of the underlying signal. The problem of recovering the fine details
of a signal from its low-resolution measurements can be interpreted
as super-resolution on the sphere problem.

This work was inspired by the seminal paper of Candes and Fernandez-Granada
\cite{candes2013towards}, who investigated the recovery of Dirac
ensemble on the interval $[-\pi,\pi]$ from its low $2N+1$ Fourier
coefficients. The main result of this paper states that if the Diracs
are separated by at least $\frac{4\pi}{N}$, the signal can be recovered
as the unique solution of a tractable convex optimization problem.
This result holds for higher dimensions as well under a separation
condition of ${C_{d}}/{N}$, where $C_{d}$ is a constant which
depends only on the dimension of the problem (e.g. $C_{1}=4\pi$).
A consecutive paper \cite{candes2013super} showed that the recovery
is robust to noisy measurements. Similar results are given for support
detection from low Fourier coefficients \cite{fernandez2013support,azais2013spike}, recovery of non-uniform splines from their projection onto
spaces of algebraic polynomials \cite{bendory2013Legendre,de2014non} and recovery of streams of pulses \cite{sop,SOP_US}. 
(see also \cite{de2012exact}).

The configuration in (\ref{eq:matrix}) resembles the formulation
in compressed sensing (CS) (e.g. \cite{donoho2006compressed,candes2006robust}).
Using CS methods, the authors of \cite{rauhut2011sparse} have suggested
to recover a  s-sparse signal with bandwidth $N$ (in the sperical harmonics domain) by only $m\sim sN^{1/2}\log^4\left(N^2\right)$ samples using $\ell_1$  minimization.  In \cite{burq2012weighted}, the number of the required samples was reduced to  $m\sim sN^{1/3}\log^4\left(N^2\right)$.
However, we note that there exist two important distinctions between
the framework suggested here and CS. Firstly, CS usually works on discrete
signals, while (\ref{eq:signal}) describes an analog model, namely
the support $\Xi$ can comprise any point on the sphere. Secondly,
CS sampling matrix is required to be incoherent in some sense, which
typically leads to random sampling strategies, while in (\ref{eq:meas})
the measurements consist of the low-end of the spherical harmonics
representation, and as a result are extremely coherent.

Recently, a number of works suggest to super-resolve signals by a
semi-definite program \cite{candes2013towards,tang2012compressive,bhaskar2011atomic,tang2013near,chi2013compressive,xu2013precise}.
We extend this line of work to signals on a sphere. The first result
of this paper is that Algorithm \ref{alg1} recovers a signal of the
form (\ref{eq:signal}) from its low-resolution measurements (\ref{eq:meas}) using a three-stage algorithm consists of solving a semi-definite program, root findind on the sphere, and least square fitting. This holds provided that the Diracs are separated by at least $\frac{\nu}{N}$ for some numerical constant $\nu$.

In \cite{deslauriers2013sampling}, the authors suggest a parametric method ('finite rate of innovation' type) to reconstruct exactly a stream of K Diracs on the sphere from $3K$ samples, which is optimal (that is to say, the number of samples is equal to the number of degrees of freedom). This approach assumes a known number of Diracs, but does not assume any separation between the Diracs. Generally, parametric methods such as MUSIC, matrix pencil and ESPRIT \cite{stoica2005spectral, 56027, roy1989esprit,schmidt1986multiple } tend to be unstable in the presence of noise or model mismatch. Our second result generalizes \cite{candes2013towards} to the sphere in the discrete setting (\ref{eq:signal-dis}) and provides an estimate of the recovery error in the presence of noise or model mismatch. 

The rest of the paper is organized as follows. Section \ref{sec:main_results} presents the two main results of this paper, and sections \ref{sec:Proof-of-the} and \ref{sec:proof_noise} prove them. Section \ref{sec:Numerical-Experiments}
is devoted to experimental results.
 Section \ref{sec:necessity} elaborates on the
necessity of the separation condition and ultimately Section \ref{sec:Conclusions-and-Future}
concludes the paper and relates it to an ongoing
research.

\section{Main Results} \label{sec:main_results}

In a previous paper \cite{bendory2013exact}, the authors established
a sufficient condition for exact recovery of a signal of the form
of (\ref{eq:signal}) from its projection onto $V_{N}$ using a convex
optimization method. The recovery relies on the following separation
condition:
\begin{defn}
\label{def:separation}A set of points $\Xi\subset\mathbb{S}^{2}$
is said to satisfy the minimal separation condition for (sufficiently
large) $N$ if 
\[
\Delta:=\min_{\xi_{i},\xi_{j}\in\Xi,\xi_{i}\neq\xi_{j}}d\left(\xi_{i},\xi_{j}\right)\geq\frac{\nu}{N},
\]
where $\nu$ is a fixed constant that does not depend on $N$. 
\end{defn}
Under the separation condition, the points $\xi_m\in\Xi$ are center of pairwise disjoint caps of area $2\pi\left(1-\cos\left(\frac{\nu}{2N}\right)\right)$\cite{atkinson2012spherical}. Consequently, the number of points on the sphere can be roughly estimated by $\frac{2}{1-\cos\left(\frac{\nu}{2N}\right)}$ (for a tighter estimation, see \cite{rankin1955closest}). In an noise-free environment, a separation constant of $2\pi$ seems to ensure exact recovery (see Figure \ref{fig:separation}). This separation coincides with the spatial resolution of the projection of $f$ onto $V_N$, namely  $F_N^*F_Nf:=P_Nf$  \cite{rafaely2004plane}.
  In a noisy environment, we increased the separation constant to be $\nu=2.5\pi$.

Before presenting the main theorem, we introduce the notion of half
space. A half space is a set $\mathcal{H\subset\mathbb{Z}}^{d}$,
satisfying $\mathcal{H}\cap\left(-\mathcal{H}\right)=\left\{ 0\right\} $,
$\mathcal{H}\cup\left(-\mathcal{H}\right)=\mathbb{Z}^{d}$, and $\mathcal{H}+\mathcal{H}\subset\mathcal{H}$
\cite{dumitrescu2007positive}. Figure \ref{fig:Half-space-in-1}
demonstrates the two half spaces in two dimensions.

We make use of the following notations. Let $\Theta_{k}$ be an elementary
Toeplitz matrix with ones on the $k$ diagonal and zeros elsewhere
(the main diagonal is indexed by zero), let $\otimes$ be a Kronecker
product and $\Theta_{k,\ell}:=\Theta_{\ell}\otimes\Theta_{k}$. $tr\left(X\right)$
denotes the trace of the matrix $X$, and $\delta_{k\text{,\ensuremath{\ell}}}$
denotes a Kronecker Delta function, defined as 
\begin{equation}
\delta_{k\text{,\ensuremath{\ell}}}=\begin{cases}
1 & k=\ell=0,\\
0 & othewise.
\end{cases}\label{eq:delta}
\end{equation}

\begin{figure}[h]
\begin{centering}
\includegraphics[scale=0.5]{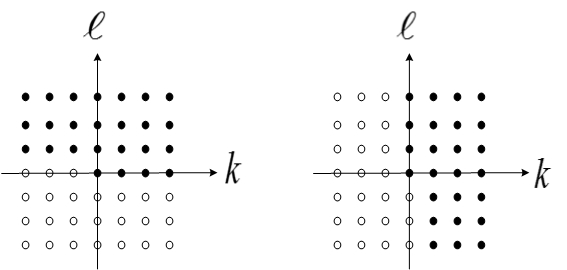} 
\par\end{centering}

\protect\caption{Half spaces in two dimensions. \label{fig:Half-space-in-1}}
\end{figure}

\begin{algorithm}
\textbf{Input:} The signal's projection onto $V_{N}$ (\ref{eq:meas}).

\textbf{Output: }A signal of the form (\ref{eq:signal}). 
\begin{enumerate}
\item Solve the semi-definite program 
\begin{equation} \label{eq:opt}
max_{\alpha,Q}\left\langle y,\alpha\right\rangle -\varepsilon\Vert\alpha\Vert_{2}\quad\mbox{subject to}\quad\begin{bmatrix}Q & \grave{h}\\
\grave{h}^{*} & 1
\end{bmatrix}\succeq0,\quad tr\left(\Theta_{k,\ell}Q\right)=\delta_{k\text{,\ensuremath{\ell}}},\quad(k,\ell)\in\mathcal{H},
\end{equation}
where $Q\in\mathbb{C}^{(2N+1)^{2}\times(2N+1)^{2}}$ is a Hermitian
matrix, $\grave{h}\in\mathbb{C}^{(2N+1)^{2}}$ is related to $\alpha$
through (\ref{eq:h}), and $\mathcal{H}$ is a half space. 
\item Define $q(\xi)=F_{N}^{*}\alpha(\xi)$, and find the roots of the polynomials
$1-q\left(\xi\right)$ and $1+q\left(\xi\right)$. These roots are
denoted as $\hat{\Xi}:=\left\{ \hat{\xi}_{m}\right\} .$ 
\item Solve the least-square system 
\[
\sum_{\hat{\xi}_{m}\in\hat{\Xi}}\hat{c}_{m}\overline{Y}_{n,k}\left(\hat{\xi}_{m}\right)=y_{n,k},\quad n=0,\dots,N,\thinspace k=-n,\dots,n.
\]

\item Construct the recovered signal as 
\[
\hat{f}=\sum_{m}\hat{c}_{m}\delta_{\hat{\xi}_{m}}.
\]

\end{enumerate}
\protect\caption{Recovery of a signal of the form (\ref{eq:signal}) from its projection
onto the space of spherical harmonics of degree $\leq N$. \label{alg1}}
\end{algorithm}

Algorithm \ref{alg1} consists of three stages: solving a semi-definite
program, root finding and least-square fitting. Although the model
(\ref{eq:signal}) reflects an analog (infinite dimensional) signal,
we suggest to recover it from (\ref{eq:meas}) by a semi-definite
program with $\mathcal{\ensuremath{O}}\left(N^{4}\right)$ variables.
This results in an unconstrained accuracy and no dependence on any
discretization step. 
\begin{thm}
\label{Th:main}Let $\Xi=\{\xi_{m}\}$ be the support of a signed
measure f of the form $(\ref{eq:signal})$. Let $\{Y_{n,k}\}_{n=0}^{N}$
be any spherical harmonics basis for $V_{N}(\mathbb{S}^{2})$ and
let $y_{n,k}=\langle f,Y_{n,k}\rangle$, $0\le n\le N$, $-n\le k\le n$.
If $\Xi$ satisfies the separation condition of Definition $\ref{def:separation}$,
then Algorithm \ref{alg1} recovers f exactly with $\varepsilon=0$ in (\ref{eq:opt}).
Furthermore, Algorithm \ref{alg1} recovers a non-negative signal
\emph{(}i.e.\emph{ $c_{m}>0$)} exactly as long as $f$ has at most
$N$ non-zero values. 
\end{thm}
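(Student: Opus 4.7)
The plan has two parts: first, dualize to identify a correct dual certificate; second, show that Algorithm \ref{alg1} actually produces that certificate and then uses it to recover the signal.

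For the first part, I would invoke the authors' prior result \cite{bendory2013exact}: under the separation condition of Definition \ref{def:separation}, total-variation-norm minimization over signed measures on $\mathbb{S}^2$ subject to $F_N g = y$ has $f$ as its unique solution, certified by a dual polynomial $q^\star = F_N^* \alpha^\star \in V_N$ satisfying $q^\star(\xi_m) = \mathrm{sign}(c_m)$ on $\Xi$ and $|q^\star(\xi)| < 1$ elsewhere. Standard Lagrangian duality identifies the relevant dual problem as
\[
\max_\alpha \langle y, \alpha \rangle \quad \text{subject to} \quad \|F_N^* \alpha\|_\infty \le 1,
\]
and strong duality ensures that any dual maximizer yields the support via $\{\xi : q(\xi) = \pm 1\} = \Xi$ with equality at the right signs.

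For the second part, I would prove that the SDP (\ref{eq:opt}) is a finite-dimensional reformulation of this dual. The supremum-norm constraint $|q(\xi)| \le 1$ on the sphere translates, after writing $\xi = (\phi, \theta)$ and using the expansion of $Y_{n,k}$ in $e^{j\phi}$ and $\cos\theta$, into a bounded bivariate trigonometric polynomial constraint. The Dumitrescu-type SDP representation \cite{dumitrescu2007positive} encodes such a constraint by a positive semidefinite Gram matrix $Q \succeq 0$ with the Toeplitz-moment conditions $\mathrm{tr}(\Theta_{k,\ell} Q) = \delta_{k,\ell}$ on a half-space $\mathcal{H}$; the Schur-complement block in (\ref{eq:opt}) then couples $Q$ to the coefficient vector $\grave h$ of $q$. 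Showing that feasible pairs $(\alpha, Q)$ of (\ref{eq:opt}) are exactly those $\alpha$ with $\|F_N^* \alpha\|_\infty \le 1$, and that the two objectives coincide when $\varepsilon = 0$, is the main technical obstacle, since it requires a careful passage from the spherical harmonics basis to a bivariate trigonometric basis on the torus and a verification that the half-space parametrization is sharp, neither double-counting nor omitting positivity constraints.

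Once the SDP returns $\alpha$ with $q = F_N^* \alpha$ equal to the dual certificate, Step 2 trivially recovers the support: $\Xi = \{\xi : q(\xi) = 1\} \cup \{\xi : q(\xi) = -1\}$, and polynomial root-finding on $1 - q$ and $1 + q$ on $\mathbb{S}^2$ yields $\hat \Xi = \Xi$. Step 3 is then an exact linear system on the true support; the matrix $[\overline Y_{n,k}(\xi_m)]$ has full column rank under the separation condition (the spherical harmonic atoms at well-separated points are linearly independent), so the least-squares solution satisfies $\hat c_m = c_m$ and $\hat f = f$ exactly.

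For the non-negative case, no separation is required. Since $|\Xi| \le N$ and $\dim V_N = (N+1)^2 > N$, one constructs a dual certificate $q \in V_N$ satisfying $q(\xi_m) = 1$ on the support and $q(\xi) < 1$ off it directly by polynomial interpolation, e.g., by choosing $q$ as a slightly perturbed product of localized bumps on $\mathbb{S}^2$. Uniqueness of the TV-minimizer then follows from this certificate, and Algorithm \ref{alg1} recovers $f$ by the same root-finding and least-squares argument as above.
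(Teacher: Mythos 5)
Your overall route is the one the paper takes: invoke the TV-minimization uniqueness result and the dual certificate from \cite{bendory2013exact}, pass to the dual problem $\max_\alpha \langle y,\alpha\rangle$ subject to $\left\Vert F_N^*\alpha\right\Vert_\infty\le 1$, reformulate the sup-norm constraint as a semi-definite program via the Bounded Real Lemma of \cite{dumitrescu2007positive}, and finish with root finding and least squares. You correctly identify the spherical-harmonics-to-torus translation as the technical core; the paper's Lemma \ref{Lemma:1} carries it out, the two key observations being that $P_{n,k}(\cos\theta)=(\sin\theta)^{|k|}L_n^{(k)}(\cos\theta)$ is a trigonometric polynomial of degree $n$ in $\theta$, and that multiplying $F_N^*\alpha$ by $e^{jN(\theta+\phi)}$ produces a causal polynomial of the same magnitude, to which the Bounded Real Lemma applies.

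The one genuine error is your claim that strong duality gives $\{\xi: q(\xi)=\pm1\}=\Xi$ for \emph{any} dual maximizer, so that $\hat\Xi=\Xi$ and Step 2 ``trivially'' recovers the support. Complementary slackness only yields the inclusion $\Xi\subseteq\{\xi:|q(\xi)|=1\}$: it forces $q(\xi_m)=\mathrm{sign}(c_m)$ on the true support, but nothing prevents a dual-optimal $q$ from attaining modulus one at additional points; for instance, the constant polynomial $q\equiv 1$ is dual feasible and optimal whenever all $c_m>0$. The paper accordingly asserts only $\Xi\subseteq\hat\Xi$, relies on the least-squares step to zero out spurious support points, and explicitly isolates the degenerate case $q\equiv\pm1$ as the single failure mode (arguing it is avoided in practice by interior-point solvers). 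Relatedly, your full-column-rank claim for the least-squares matrix is argued for the separated set $\Xi$, but the system is actually solved over $\hat\Xi$, which need not satisfy the separation condition. Finally, for the non-negative case the paper simply cites the sparsity result of \cite{bendory2013exact} (at most $N$ spikes suffice, no separation needed); your proposed ``perturbed product of localized bumps'' certificate is neither constructed nor needed, and as stated it would not obviously control $q$ globally on $\mathbb{S}^2$.
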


In addressing the noisy case, we consider the following discrete model. Let us denote by $D_L$ the set 
of measures of the form (\ref{eq:signal-dis}), that is, $f=\sum_m c_m\delta_{\xi_m}$, with $\xi_m \in \mathbb{S}_L^2$, $\forall m$. 
Observe that $f$ can be regarded as a discrete signal $\left\{ c_m \right\}$ indexed by the set $\mathbb{S}_L^2$.
Therefore, we may also define for
$f \in D_L$, $\Vert f\Vert_{\ell_p}:=\left({\sum_m\vert c_m\vert^p}\right)^{1/p},\thinspace p\geq 1$.  Note that for $f \in D_L$, 
$\left\Vert f \right\Vert_{TV}=\left\Vert f \right\Vert_{\ell_1}$ (see Section \ref{sec:Proof-of-the} for definition of the TV norm for measures). Next, we consider noisy input data of the type 
\begin{equation}
y_{n,k}:=\langle f,Y_{n,k}\rangle+\eta_{n,k}=\sum_{m}c_{m}\overline{Y}_{n,k}(\xi_{m})+\eta_{n,k},\quad 0\le n\le N,\quad-n\le k\le n,\label{5}
\end{equation}
where $f\in D_L$, where $\eta_{n,k}$ is an additive noise.

 The following result shows that the recovery error using $\ell_1$ minimization is proportional to $SRF^2$ and the noise level.
\begin{thm}
\label{Theorem2} Let $\Xi=\{\xi_{m}\}\subset \mathbb{S}_L^2$ be the support of a signed
measure $f\in D_L$ (i.e. of the form $(\ref{eq:signal-dis})$), where $\eta :=\left\{\eta_{n,k}\right\}$ satisfies $\Vert \eta\Vert_2\leq \varepsilon$. Let $\{Y_{n,k}\}_{n=0}^{N}$
be any spherical harmonics orthobasis for $V_{N}(\mathbb{S}^{2})$ and
let $\{y_{n,k}\}$ be as in (\ref{5}).
For sufficiently large $L$, if $\Xi$ satisfies the separation condition
of Definition $\ref{def:separation}$, then the solution $\hat{f}$ of 
\begin{equation} 
\min_{g\in D_L}\left\Vert {g}\right\Vert _{\ell_1}\quad\mbox{subject to}\quad\left\Vert y-F_{N}g\right\Vert _{\ell_2}\leq \varepsilon,\label{6}
\end{equation}
satisfies
\[
\Vert\hat{f}-f\Vert_{\ell_1}\leq C_{0}SRF^{2}\varepsilon,
\]
where $C_0$ is a numerical constant. 
\end{thm}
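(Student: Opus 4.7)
The plan is to extend the noisy super-resolution argument of Candes--Fernandez-Granda \cite{candes2013super} to the sphere, plugging in the dual-certificate machinery already developed for Theorem \ref{Th:main}. Set $h := \hat{f} - f \in D_L$. Feasibility of both $f$ and $\hat{f}$ in (\ref{6}) gives $\|F_N^L h\|_{\ell_2} \leq 2\varepsilon$, while $\ell_1$-optimality of $\hat{f}$ yields $\|\hat{f}\|_{\ell_1} \leq \|f\|_{\ell_1}$. I would split the grid $\mathbb{S}_L^2$ into a near-support set $T := \bigcup_m T_m$, where $T_m$ is the intersection of $\mathbb{S}_L^2$ with a geodesic cap of radius $r \sim 1/N$ around $\xi_m$, and its complement $T^c$. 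Since $\mathrm{supp}(f) \subseteq T$, combining $\|\hat{f}\|_{\ell_1} \leq \|f\|_{\ell_1}$ with the reverse triangle inequality on $T$ yields $\|h_{T^c}\|_{\ell_1} \leq \|h_T\|_{\ell_1}$, and it therefore suffices to prove $\|h_T\|_{\ell_1} \leq C \cdot SRF^2 \cdot \varepsilon$.

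The key analytic input is the family of dual polynomials used in the proof of Theorem \ref{Th:main}: a $q_0 \in V_N$ with $q_0(\xi_m) = \mathrm{sign}(c_m)$, $|q_0| \leq 1$ everywhere, the quadratic lower bound $1 - |q_0(\xi)| \geq c_1 N^2 d(\xi,\xi_m)^2$ on each cap $T_m$, and $1 - |q_0(\xi)| \geq c_2$ on $T^c$. The bridge from $\|F_N^L h\|_{\ell_2}$ to pointwise quantities is the duality identity
\[
\sum_{k} p(\xi_k)\, h_k \;=\; \langle \alpha_p,\, F_N^L h\rangle_{\ell_2}, \qquad p \in V_N,
\]
where $\alpha_p$ is the spherical-harmonic coefficient vector of $p$, giving $\bigl|\sum_k p(\xi_k) h_k\bigr| \leq 2\varepsilon \|\alpha_p\|_{\ell_2}$ by Cauchy--Schwarz. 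Plugging $p = q_0$ into this identity and exploiting the two lower bounds above already controls $\|h_{T^c}\|_{\ell_1}$ and produces the weighted bound $N^2 \sum_{k \in T_m} d(\xi_k,\xi_m)^2 |h_k| \lesssim \varepsilon$, which however degenerates for grid points sitting on or very close to $\xi_m$.

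The main obstacle is upgrading this weighted bound into a genuine control on $\|h_T\|_{\ell_1}$, and this is where the $SRF^2$ factor enters. The plan is to construct, in addition to $q_0$, two auxiliary polynomials $q^{(1)}, q^{(2)} \in V_N$ that vanish at every $\xi_m$ and Taylor-interpolate a prescribed pair of tangential gradient components there while remaining uniformly small elsewhere. Testing $h$ against suitable linear combinations of $q_0, q^{(1)}, q^{(2)}$ bounds the zeroth and first tangential moments of $h$ over each $T_m$. Because the grid points of $T_m$ lie in a cap of radius $\sim 1/N$ and are separated by at least $1/L$, the local Vandermonde-type system relating $(h_k)_{k \in T_m}$ to these low-order moments is ill-conditioned with condition number of order $(L/N)^2 = SRF^2$, so inverting it yields $\|h_T\|_{\ell_1} \leq C \cdot SRF^2 \cdot \varepsilon$, which combined with $\|h\|_{\ell_1} \leq 2\|h_T\|_{\ell_1}$ gives the claim. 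The technically delicate step is the construction of $q^{(1)}, q^{(2)}$ on $\mathbb{S}^2$: the absence of translation invariance forces us to differentiate the zonal reproducing kernel underlying the dual certificate of Theorem \ref{Th:main} and to bound its tail in local coordinates around each $\xi_m$ uniformly in $N$, a step that on the torus is immediate via shifting a single kernel.
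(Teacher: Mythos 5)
Your setup (feasibility giving $\Vert F_N h\Vert_{\ell_2}\le 2\varepsilon$, the cone condition from $\ell_1$-optimality, and the dual certificate $q$ with quadratic decay $1-|q(\xi)|\gtrsim N^2 d(\xi,\xi_m)^2$ near each $\xi_m$) is the right starting point, but the final step is where the argument breaks. You propose to control $\Vert h_T\Vert_{\ell_1}$ by constructing gradient-interpolating polynomials $q^{(1)},q^{(2)}$, bounding the zeroth and first tangential moments of $h$ on each cap $T_m$, and then ``inverting'' the local system with condition number $SRF^2$. That system is not merely ill-conditioned -- it is underdetermined: each cap of radius $\sim 1/N$ contains up to $\sim (L/N)^2=SRF^2$ grid points, so you have $O(SRF^2)$ unknowns $(h_k)_{k\in T_m}$ and only three moment constraints. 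No inversion can recover $\Vert h_{T_m}\Vert_{\ell_1}$ from three moments of such a vector (two opposite spikes at nearby grid points have tiny low-order moments but order-one $\ell_1$ norm). This is precisely why, in the continuum, \cite{candes2013super} does \emph{not} bound $\Vert\hat f-f\Vert_{TV}$ but only a resolution-weighted error; you are importing that machinery into a setting where it cannot deliver the claimed conclusion.

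What you are missing is the hypothesis that makes the discrete case genuinely easier: the grid $\mathbb{S}_L^2$ satisfies $d(\xi_i,\xi_j)\ge 1/L$ for all pairs of grid points. Hence the only grid point within distance $1/L$ of $\xi_m$ is $\xi_m$ itself, and \emph{every} $\xi\in\mathbb{S}_L^2\setminus\Xi$ satisfies $d(\xi,\Xi)\ge 1/L$, so the quadratic decay (together with Lemma 4.4 of \cite{bendory2013exact} for points far from $\Xi$, and $L$ large enough) gives the uniform gap $q_{max}^C:=\max_{\xi\in\mathbb{S}_L^2\setminus\Xi}|q(\xi)|\le 1-\tilde c\, SRF^{-2}$. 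There is no ``degenerate'' regime to repair, and no auxiliary polynomials are needed. The paper then concludes by a short chain of triangle inequalities: choose $q$ to interpolate $\mathrm{sgn}(h(\xi_m))$ (not $\mathrm{sign}(c_m)$ -- note your certificate has the wrong signs for this purpose), split $h$ into a low-pass part $h_L=P_Nh\,d\xi$ with $\Vert h_L\Vert_{TV}\le 2\sqrt{\pi}\Vert P_Nh\Vert_{L_2}\le 4\sqrt{\pi}\varepsilon$ and a high-pass part $h_H$ with $\langle q,h_H\rangle=0$ exactly, and combine the resulting inequality $\Vert h_{H,\Xi}\Vert_{TV}\le q_{max}^C\Vert h_{H,\Xi^C}\Vert_{TV}$ with the minimality of $\Vert\hat f\Vert_{\ell_1}$ to obtain $\Vert h\Vert_{\ell_1}\le 8\sqrt{\pi}\varepsilon/(1-q_{max}^C)\le C_0\,SRF^2\varepsilon$. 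Your duality identity with $p=q$ can substitute for the low/high decomposition (Parseval gives $\Vert\alpha_q\Vert_{\ell_2}=\Vert q\Vert_{L_2}\le 2\sqrt{\pi}$), but the moment-inversion step should be discarded entirely in favor of the uniform bound on $q_{max}^C$.
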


We have chosen to work with a bounded noise, however our technique can be extended to other noise models. For instance, suppose that $\eta_{n,k}$ are iid entries $\mathcal{N}\sim(0,\sigma^{2})$. In this case we obtain the following corollary:
\begin{cor}
\label{Corollary1} Consider the model (\ref{5}) and suppose that $\eta_{n,k}$ are iid entries $\mathcal{N}\sim(0,\sigma^{2})$. Fix $\gamma>0$. For sufficiently large $L$, if $\Xi$ satisfies the separation condition
of Definition $\ref{def:separation}$, then the solution $\hat{f}$ of (\ref{6}) with $\varepsilon=(N+1)(1+\gamma)\sigma$
satisfies
\[
\Vert\hat{f}-f\Vert_{\ell_1}\leq C_{0}(1+N)(1+\gamma)SRF^{2}\sigma,
\]
with probability of at least $1-e^{-\frac{1}{2}\left(N+1\right)^{2}\gamma^{2}}$,
where $C_0$ is a numerical constant. 
\end{cor}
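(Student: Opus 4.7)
The plan is to reduce the corollary to a direct application of Theorem \ref{Theorem2} by establishing that the chosen $\varepsilon = (N+1)(1+\gamma)\sigma$ dominates $\Vert \eta \Vert_2$ with the claimed probability. Once $\Vert \eta \Vert_2 \leq \varepsilon$ holds, the separation hypothesis and the deterministic bound $\Vert \hat f - f\Vert_{\ell_1} \leq C_0\, SRF^2\, \varepsilon$ from Theorem \ref{Theorem2} give the stated inequality immediately by substitution.

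Thus the only thing to prove is the tail bound on $\Vert \eta\Vert_2$. First I would count degrees of freedom: the noise vector has components indexed by $0\le n\le N$, $-n\le k\le n$, so $\eta\in\mathbb{R}^{(N+1)^2}$ (or $\mathbb{C}^{(N+1)^2}$; the real case is what matters here, and the complex case follows by treating real and imaginary parts separately). Writing $m=(N+1)^2$ and $Z:=\eta/\sigma$, the vector $Z$ is a standard Gaussian in $\mathbb{R}^{m}$.

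Next I would invoke the standard Gaussian concentration inequality for $1$-Lipschitz functions: since the Euclidean norm $z\mapsto \Vert z\Vert_2$ is $1$-Lipschitz, one has
\[
\Pr\bigl(\Vert Z\Vert_2 \ge \mathbb{E}\Vert Z\Vert_2 + t\bigr) \le e^{-t^2/2}, \qquad t\ge 0.
\]
By Jensen's inequality, $\mathbb{E}\Vert Z\Vert_2 \le \sqrt{\mathbb{E}\Vert Z\Vert_2^{\,2}} = \sqrt{m} = N+1$. Choosing $t = (N+1)\gamma$ then yields
\[
\Pr\bigl(\Vert Z\Vert_2 \ge (N+1)(1+\gamma)\bigr) \le e^{-\frac{1}{2}(N+1)^2\gamma^{2}},
\]
so that $\Vert \eta\Vert_2 = \sigma\Vert Z\Vert_2 \le (N+1)(1+\gamma)\sigma = \varepsilon$ with probability at least $1-e^{-\frac{1}{2}(N+1)^2\gamma^2}$. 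On this event Theorem \ref{Theorem2} gives $\Vert\hat f - f\Vert_{\ell_1} \le C_0\, SRF^2\, \varepsilon = C_0(1+N)(1+\gamma)\,SRF^2\,\sigma$, as claimed.

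There is no substantive obstacle here; the argument is essentially bookkeeping plus one concentration step. The only minor point to check is the dimension count of $\eta$ and, if one prefers a more elementary route than Gaussian-Lipschitz concentration, an alternative would be to use a chi-square tail bound (e.g.\ the Laurent--Massart inequalities) on $\Vert Z\Vert_2^{\,2}\sim \chi^2_{m}$, which gives a bound of the same flavor and can be tuned to produce the exponent $\tfrac12(N+1)^2\gamma^2$.
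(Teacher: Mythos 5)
Your proposal is correct and follows the same route as the paper: reduce to Theorem \ref{Theorem2} by showing $\Vert\eta\Vert_{2}\leq\varepsilon=(N+1)(1+\gamma)\sigma$ holds with probability at least $1-e^{-\frac{1}{2}(N+1)^{2}\gamma^{2}}$. The only cosmetic difference is the concentration tool: the paper cites the $\chi^{2}$ tail bound of Laurent--Massart (which you mention as your alternative), while you derive the same exponent from Gaussian--Lipschitz concentration plus Jensen's inequality; both are valid and give identical conclusions.
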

Theorem \ref{Theorem2} and Corollary \ref{Corollary1} are proved in Section \ref{sec:proof_noise}.
 
\section{Proof of Theorem \ref{Th:main} \label{sec:Proof-of-the}}

The proof of Theorem \ref{Th:main} relies on a few results from \cite{bendory2013exact}.
To this end, recall the following definition \cite{bendory2013exact,rudin1986real}:
\begin{defn}
Let $\mathcal{B}(A)$ be the Borel $\sigma$-Algebra on a compact
space $A$, and denote by $\mathcal{M}(A)$ the associated space of
real Borel measures. The Total Variation of a real Borel measure $v\in\mathcal{M}(A)$
over a set $B\in\mathcal{B}(A)$ is defined by 
\[
\vert v\vert(B)=\sup\sum_{k}\vert v(B_{k})\vert,
\]
where the supremum is taken over all partitions of $B$ into a finite
number of disjoint measurable subsets. The total variation $\vert v\vert$
is a non-negative measure on $\mathcal{B}(A)$, and the Total Variation
(TV) norm of $v$ is defined as 
\[
\|v\|_{TV}=\vert v\vert(A).
\]
 
\end{defn}
In short, the total variation norm of a signed measure can be interpreted
as the generalization of $\ell_{1}$ norm to the real line.
This is \emph{not} the total variation of a function, a frequently-used regularizer in signal processing (see \cite{mcewen2013sparse} for the definition of the discrete total variation on the sphere). For a
measure of the form of (\ref{eq:signal}), it is easy to see that
\[
\|f\|_{TV}=\sum_{m}\vert c_{m}\vert.
\]

The following lemma concerns the existence of an interpolating polynomial
as follows \cite{bendory2013exact}:
\begin{lem}
\label{Lemma:q}If $\Xi$ satisfies the separation condition of Definition
\ref{def:separation}, then there exists a polynomial $q\in V_{N}$
such that 
\begin{eqnarray*}
q(\xi_{m}) & = & u_{m},\quad\forall\xi_{m}\in\Xi,\\
q(\xi) & < & 1,\quad\xi\notin\Xi,
\end{eqnarray*}
for any signed set $\{u_{m}\}$ with $|u_{m}|=1.$
\end{lem}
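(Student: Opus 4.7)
The plan is to construct an explicit dual certificate $q\in V_N$ of the Candès--Fernandez-Granda type, adapted to the sphere. Pick a ``spherical Fejér'' kernel $K(\xi,\xi')\in V_N$ (viewed as a function of $\xi$ for each fixed $\xi'$) that is real-valued, symmetric, satisfies $K(\xi,\xi)=1$, and whose magnitude and first/second tangential derivatives decay rapidly in the geodesic distance $d(\xi,\xi')$. A natural candidate is a normalized square of the reproducing kernel of $V_{\lfloor N/2\rfloor}$, which automatically lies in $V_N$ by the product rule for spherical harmonics and inherits the sharp localization needed.

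With this $K$ in hand, use the ansatz
\[
q(\xi)=\sum_{\xi_m\in\Xi}\alpha_m\,K(\xi,\xi_m)+\sum_{\xi_m\in\Xi}\vec\beta_m\cdot\nabla_{\xi'}K(\xi,\xi_m)\bigr|_{\xi'=\xi_m},
\]
where $\nabla$ is the tangential gradient on $\mathbb{S}^2$. Impose the two families of interpolation constraints $q(\xi_m)=u_m$ and $\nabla q(\xi_m)=0$ for all $m$. The second family is what forces $\xi_m$ to be a local extremum of $q$, which in turn is what lets us control $|q|$ in small neighborhoods of the support.

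The constraints form a linear system $M\binom{\alpha}{\beta}=\binom{u}{0}$ whose diagonal blocks are essentially the identity (after normalization by the diagonal of the Hessian of $K$), while every off-diagonal entry is of the form $K(\xi_m,\xi_\ell)$, $\nabla K(\xi_m,\xi_\ell)$, or $\nabla^2 K(\xi_m,\xi_\ell)$ evaluated at distinct support points. Using the separation $d(\xi_m,\xi_\ell)\geq \nu/N$ and the point-count bound on caps cited just after Definition~\ref{def:separation}, sum these off-diagonal contributions over all $\xi_\ell\neq\xi_m$ in geometrically growing annuli around $\xi_m$ to show that the off-diagonal part of $M$ has operator norm strictly less than $1$ once $\nu$ is chosen large enough. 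Invertibility of $M$ then yields coefficients with $\alpha_m=u_m+O(1)$ small perturbation and $\|\vec\beta_m\|=O(1/N)$.

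Finally, verify $|q(\xi)|<1$ for $\xi\notin\Xi$ by splitting into two regions. Near a support point $\xi_m$, Taylor-expand $q$ at $\xi_m$; the gradient vanishes by construction, and the Hessian is dominated by the (strongly negative) contribution from $u_m$ times the Hessian of $K$ at the diagonal, so $q(\xi)=u_m-\tfrac12\langle\xi-\xi_m,H(\xi-\xi_m)\rangle+\cdots$ with $H$ negative-definite plus a small perturbation, giving $|q(\xi)|<1$ in a cap of radius $\sim 1/N$. In the complementary ``far'' region, bound $|q|$ directly by summing the tail of $K$ and its derivatives against $|\alpha_m|, \|\vec\beta_m\|$ using the separation again. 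The non-negative case is easier: take $u_m\equiv 1$, so only the $\alpha_m$ block is needed, and $N$ support points make the linear system square and solvable without the derivative constraints; a counting/Vandermonde-type argument on $V_N$ (which has dimension $(N+1)^2\geq N$) gives existence.

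The main obstacle is the kernel construction: one needs a polynomial in $V_N$ whose magnitude together with its first two tangential derivatives decays fast enough in $d(\xi,\xi')$ that, under the separation $\Delta\geq\nu/N$, the summed off-diagonal bounds are strictly less than the diagonal contribution and that the ``far region'' tail sums to less than $1$. Calibrating $\nu$ against these explicit constants is where the sphere geometry enters nontrivially, in contrast to the flat-torus case of \cite{candes2013towards}.
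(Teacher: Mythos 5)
First, a point of calibration: the paper does not actually prove this lemma --- it is imported verbatim from the authors' earlier work \cite{bendory2013exact}, and the quantitative facts the present paper later extracts from that construction (the bound $F''\le -cN^2$ near the spikes, the bound $|q|\le(1+\delta)/(1+s)$ far from them) are likewise cited rather than derived. Your proposal is therefore being compared against the construction in \cite{bendory2013exact}, and its architecture --- a localized kernel $K\in V_N$, the ansatz $q=\sum_m\alpha_m K(\cdot,\xi_m)+\sum_m\vec\beta_m\cdot\nabla K(\cdot,\xi_m)$ with the constraints $q(\xi_m)=u_m$, $\nabla q(\xi_m)=0$, invertibility of the block system via off-diagonal decay summed over annuli, and a near/far split to verify $|q|<1$ --- is exactly the Cand\`es--Fernandez-Granada scheme that reference adapts to the sphere. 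So the route is the right one.

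The difficulty is that the single ingredient carrying all of the content of the lemma is not supplied: a polynomial in $V_N$ whose magnitude and first two \emph{tangential} derivatives (suitably normalized by $N$ and $N^2$) decay fast enough in $Nd(\xi,\xi')$ that the annulus sums converge with room to spare, the Hessian at each $\xi_m$ retains a negative-definiteness gap of order $N^2$, and the far-region tail stays strictly below $1$. You name this as ``the main obstacle'' and leave it open, which means the argument is an outline rather than a proof --- the calibration of $\nu$ against these decay constants \emph{is} the lemma. Moreover, your specific candidate is not obviously adequate: the reproducing kernel of $V_n(\mathbb{S}^2)$ is a Jacobi-type kernel whose normalized off-diagonal decay is only of order $(n\theta)^{-3/2}$, so its square decays like $(n\theta)^{-3}$; after paying $N^2$ for two tangential derivatives and summing over the roughly $r$ points in the annulus at distance $r\nu/N$, the Hessian-block sums are at best borderline. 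The standard remedy on the sphere is a smoothly filtered kernel $\sum_n a(n/N)\sum_k Y_{n,k}(\xi)\overline{Y}_{n,k}(\xi')$ with $a$ smooth and compactly supported, which yields arbitrarily fast polynomial decay of the kernel and all its derivatives; some such choice (and the accompanying explicit estimates) is needed before the off-diagonal and far-field bounds can be closed. As a minor aside, your remark on the non-negative case also overreaches: solvability of a square interpolation system gives $q(\xi_m)=1$ but says nothing about $q<1$ off the support, which is the part that actually requires work there.
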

The main Theorem of \cite{bendory2013exact} is the following:
\begin{thm}
\label{Th:main1}Let $\Xi=\{\xi_{m}\}$ be the support of a signed
measure f of the form $(\ref{eq:signal})$. Let $\{Y_{n,k}\}_{n=0}^{N}$
be any spherical harmonics basis for $V_{N}(\mathbb{S}^{2})$ and
let $y_{n,k}=\langle f,Y_{n,k}\rangle$, $0\le n\le N$, $-n\le k\le n$.
If $\Xi$ satisfies the separation condition of Definition $\ref{def:separation}$,
then $f$ is the unique solution of
\begin{equation}
\min_{g\in\mathcal{M}(\mathbb{S}^{2})}\|g\|_{TV}\quad\mbox{subject to}\quad F_{N}g=y,\label{eq:tv}
\end{equation}
 where $\mathcal{M}(\mathbb{S}^{2})$ is the space of signed Borel
measures on $\mathbb{S}^{2}$. 
\end{thm}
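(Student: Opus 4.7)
The plan is a dual-certificate argument with Lemma~\ref{Lemma:q} as the workhorse. The lemma supplies, for any choice of unit signs on $\Xi$, a polynomial in $V_N$ interpolating those signs and strictly bounded by $1$ in modulus off $\Xi$. I would use it twice: first to lower bound the TV norm of any competitor by $\|f\|_{TV}$, then to upgrade any alternative minimizer to equality with $f$.

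For the optimality half, set $u_m=\mathrm{sign}(c_m)$ and let $q\in V_N$ be as in Lemma~\ref{Lemma:q}. Since $q\in V_N$, write $q=F_N^*\alpha$ for some coefficient vector $\alpha$. For any competitor $\tilde f\in\mathcal{M}(\mathbb{S}^2)$ with $F_N\tilde f=y$, TV--$L^\infty$ duality together with $|q|\le 1$ on $\mathbb{S}^2$ gives
\[
\|\tilde f\|_{TV}\ \ge\ \int q\, d\tilde f\ =\ \langle\alpha,F_N\tilde f\rangle\ =\ \langle\alpha,y\rangle\ =\ \int q\, df\ =\ \sum_m c_m\mathrm{sign}(c_m)\ =\ \|f\|_{TV}.
\]
Hence $f$ attains the minimum in (\ref{eq:tv}).

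For uniqueness, suppose $\tilde f$ is also optimal. Then $\int q\, d\tilde f = \|\tilde f\|_{TV}$, and applying this to the polar decomposition of $\tilde f$ while using the strict inequality $|q(\xi)|<1$ for $\xi\notin\Xi$ forces $\tilde f$ to be supported on $\Xi$. So $\tilde f=\sum_m\tilde c_m\delta_{\xi_m}$, and $h:=\tilde f-f$ lies in $\ker F_N$ and is supported on $\Xi$. Apply Lemma~\ref{Lemma:q} a second time with the sign pattern $v_m=\mathrm{sign}(\tilde c_m-c_m)$ (taking $v_m=1$ at indices where $\tilde c_m=c_m$), producing $q'=F_N^*\beta$ with $q'(\xi_m)=v_m$. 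Then
\[
0\ =\ \langle\beta,F_N h\rangle\ =\ \int q'\, dh\ =\ \sum_m |\tilde c_m-c_m|,
\]
so $\tilde c_m=c_m$ for every $m$ and $\tilde f=f$.

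The substantive obstacle is packaged inside Lemma~\ref{Lemma:q}: building an interpolating polynomial in $V_N$ that attains prescribed unit signs on $\Xi$ while staying strictly below $1$ elsewhere is where the separation hypothesis $\Delta\ge\nu/N$ is consumed, typically by assembling a shifted squared reproducing kernel for $V_N$ and correcting values (and first derivatives) at the points of $\Xi$ by a small perturbation. Once that construction is granted, the only real point to be careful about in the argument above is the strictness of $|q|<1$ off $\Xi$: this is exactly what prevents $\tilde f$ from living on a set strictly larger than $\Xi$, and without it the equality case would only pin $\tilde f$ down up to mass outside the support.
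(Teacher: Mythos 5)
Your proof is correct, and it is the standard dual-certificate argument; note that this paper does not actually prove Theorem~\ref{Th:main1} itself but imports it from \cite{bendory2013exact}, where essentially the argument you give (certificate for the sign pattern of $c_m$ to establish optimality, a second certificate for the sign pattern of $\tilde c_m - c_m$ to force uniqueness on $\Xi$) is carried out. One small remark: the restatement of Lemma~\ref{Lemma:q} in this paper writes only $q(\xi)<1$ off $\Xi$, but the two-sided bound $|q(\xi)|<1$ that your uniqueness step needs is indeed what the lemma provides, as the paper itself uses that form in the proof of Theorem~\ref{Theorem2}.
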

Theorem \ref{Th:main1} states that if the underlying signal
satisfies the separation condition of Definition \ref{def:separation},
then the signal is the unique solution of the TV minimization (\ref{eq:tv}).
Furthermore, in the case of non-negative signals it has been shown
that the solution of (\ref{eq:tv}) is precise as long as the signal
has at most $N$ non-zero values, that is the separation condition may be replaced
by a weaker sparsity condition \cite{bendory2013exact}. 

The challenge of solving (\ref{eq:tv}) is that the analog nature
of the signal dictates an infinite-dimensional problem. One approach
to alleviate this problem is to assume that the signal lies on a grid.
In this case, the TV minimization reduces to standard $\ell_{1}$
minimization. However, the discretization leads to an unavoidable
error, which can be mitigated by refining the grid, which in turn
increases the problem complexity. This case is analyzed in Section
\ref{sec:Numerical-Experiments}. In contrast, we suggest a different approach
with (theoretically) unlimited accuracy and no dependence on discretization step.

The algorithm consists of three steps. First, we reformulate the dual
problem of TV minimization as a finite semi-definite programming.
Later on, we use the dual solution to locate the signal's support
by root finding. Finally, we estimate the amplitudes (i.e. the weights
$c_{m}$) by least-square estimation. 

With the notation $\alpha:=\left\{ \alpha_{n,k}\right\}$, let $F_{N}^{*}\alpha\left(\xi\right):=\sum_{0\leq n\leq N,k}\alpha_{n,k}Y_{n,k}\left(\xi\right)$.
We assume that $\alpha$ belongs to the subspace of vectors for which
\[
\left\Vert F_{N}^{*}\alpha\right\Vert _{\infty}:=\max_{\xi\in S^{2}}\left|\sum_{n,k}\alpha_{n,k}Y_{n,k}\left(\xi\right)\right|\leq1.
\]
Under the separation condition, this give us the following duality 
\begin{eqnarray*}
max_{\alpha}\left\langle y,\alpha\right\rangle  & := & max_{\alpha}\left\langle F_{N}f,\alpha\right\rangle =max_{\alpha}\left\langle f,F_{N}^{*}\alpha\right\rangle \\
 & = & max_{\alpha}\int_{S^{2}}F_{N}^{*}\alpha\left(\xi\right)df\left(\xi\right)\\
 & = & \int_{S^{2}}q\left(\xi\right)df\left(\xi\right)=\left\Vert f\right\Vert _{TV},
\end{eqnarray*}
where $q$ is the polynomial from Lemma \ref{Lemma:q}.
Therefore, (\ref{eq:tv}) is translated to
\begin{equation}
max_{\alpha}\left\langle y,\alpha\right\rangle \quad s.t.\quad\left\Vert F_{N}^{*}\alpha\right\Vert _{\infty}\leq1.\label{eq:opt1}
\end{equation}
Observe that the objective function is finite dimensional, whereas
the constraint is of infinite dimension. To this end, we state the
following variant of the Bounded Real Lemma \cite{dumitrescu2007positive}:
\begin{lem}
Consider a causal trigonometric polynomial of the form 
\[
H(\omega_{1},\omega_{2})=\sum_{k,\ell=0}^{N}h_{k,\ell}e^{-j\left(\omega_{1}k+\omega_{2}\ell\right)}.
\]
The following inequality holds 
\[
\left|H\left(\mathbf{\omega_{1},\omega_{2}}\right)\right|\leq1,\quad\forall\left[\omega_{1},\omega_{2}\right]\in\left[-\pi,\pi\right]\times\left[-\pi,\pi\right],
\]
if and only if there exist a Hermitian matrix $Q\succeq0$ such that
\begin{equation}
\begin{bmatrix}Q & h\\
h^{*} & 1
\end{bmatrix}\succeq0,\quad\delta_{k,\ell}=tr\left(\Theta_{k,\ell}Q\right),\quad k,\ell\in\mathcal{H},\label{eq:BRL}
\end{equation}
where $h$ is a column stacked vector of $\{h_{k,\ell}\}$ and $\mathcal{H}$ is a half space.
\end{lem}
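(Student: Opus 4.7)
My plan is to prove both implications by combining the Schur complement with the Gram-matrix parametrization of non-negative trigonometric polynomials on the bivariate torus. Introduce the vector $e(\omega_{1},\omega_{2})$ of monomials $e^{-j(\omega_{1}k+\omega_{2}\ell)}$ indexed by $(k,\ell)\in\mathcal{H}$, so that $H(\omega_{1},\omega_{2})=h^{*}e(\omega_{1},\omega_{2})$. The key structural observation is that $tr(\Theta_{k,\ell}Q)$ is exactly the sum of the entries of $Q$ along the $(k,\ell)$-multi-diagonal, which in turn is the $(k,\ell)$ Fourier coefficient of the trigonometric polynomial $e(\omega_{1},\omega_{2})^{*}Qe(\omega_{1},\omega_{2})$. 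Hence, the trace constraints $tr(\Theta_{k,\ell}Q)=\delta_{k,\ell}$ are equivalent to the pointwise identity $e(\omega_{1},\omega_{2})^{*}Qe(\omega_{1},\omega_{2})\equiv 1$.

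For the ($\Leftarrow$) direction, assume that $Q$ and $h$ satisfy the LMI and the trace equations. The Schur complement of the lower-right $1$ block gives $Q-hh^{*}\succeq 0$, so $e(\omega)^{*}Qe(\omega)\geq e(\omega)^{*}hh^{*}e(\omega)=|H(\omega_{1},\omega_{2})|^{2}$. Combining this with $e^{*}Qe\equiv 1$ yields $|H(\omega_{1},\omega_{2})|\leq 1$ on the torus, as required.

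For the ($\Rightarrow$) direction, set $P(\omega_{1},\omega_{2}):=1-|H(\omega_{1},\omega_{2})|^{2}$, which is non-negative on $[-\pi,\pi]^{2}$. The main technical step is to produce a PSD Hermitian matrix $R$, indexed by $\mathcal{H}$, such that $e(\omega)^{*}Re(\omega)=P(\omega_{1},\omega_{2})$ identically. Once such $R$ is available, define $Q:=R+hh^{*}$. Then $e^{*}Qe=P+|H|^{2}=1$, which by the Fourier-coefficient identification above is exactly the trace constraints; simultaneously, $Q\succeq hh^{*}\succeq 0$ combined with the Schur complement recovers the LMI $\begin{bmatrix}Q & h \\ h^{*} & 1\end{bmatrix}\succeq 0$.

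The main obstacle, and the reason the lemma is nontrivial in two dimensions, lies in constructing $R$. In one variable the Fej\'er--Riesz theorem provides an outright sum-of-squares factorization of any non-negative trigonometric polynomial, but the multivariate analog fails in general, so one cannot spectrally factor $P$ directly. The workaround, established in the semi-definite programming literature on positive polynomials (e.g.\ Dumitrescu), is a duality argument: the linear map $Q\mapsto e^{*}Qe$ sends the cone of PSD Hermitian matrices (supported on $\mathcal{H}\times\mathcal{H}$) \emph{surjectively} onto the cone of non-negative bivariate trigonometric polynomials of the corresponding degree, which is proved by applying the separating hyperplane theorem to the truncated moment problem on the torus. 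Invoking this surjectivity with $P$ produces the desired $R$, and the half-space structure of $\mathcal{H}$ is precisely what makes the parametrizing domain rich enough for the map to be onto the full cone.
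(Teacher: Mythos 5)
Your ($\Leftarrow$) direction is correct and is the standard argument: the Schur complement of the unit block gives $Q-hh^{*}\succeq0$, the trace constraints are equivalent to $e(\omega)^{*}Qe(\omega)\equiv1$, and together these yield $\left|H\right|^{2}\leq e^{*}Qe=1$. (Two small slips: the monomial vector should be indexed by $(k,\ell)\in\{0,\dots,N\}^{2}$, not by $\mathcal{H}$, which only indexes the trace constraints; and with your convention $H=h^{T}e$ rather than $h^{*}e$, which does not affect the modulus bound.) Note that the paper itself offers no proof of this lemma; it is quoted from Dumitrescu's monograph, so there is no in-paper argument to compare against.

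The genuine gap is in the ($\Rightarrow$) direction, precisely at the step you yourself flag as the main technical one. The surjectivity you invoke --- that every non-negative bivariate trigonometric polynomial with spectrum in $[-N,N]^{2}$ equals $e^{*}Re$ for some PSD matrix $R$ of size $(N+1)^{2}$ --- is false. The image of the PSD cone under the Gram map is the cone of sums of squares of causal polynomials of degree at most $(N,N)$, and for two or more variables this is a \emph{strict} subcone of the non-negative polynomials of that degree: this is exactly the failure of the multivariate Fej\'er--Riesz theorem that you mention, and it is not repaired by the half-space structure of $\mathcal{H}$. Your proposed duality argument breaks at the corresponding dual statement: a separating functional that is non-negative on the SOS cone has a PSD doubly-Toeplitz moment matrix, but on $\mathbb{T}^{2}$ such a matrix need not be the moment matrix of any actual measure (the truncated trigonometric moment problem is solvable on the circle by Carath\'eodory--Toeplitz, which is why the univariate Bounded Real Lemma is exact, but not on $\mathbb{T}^{d}$ for $d\geq2$), so the separating-hyperplane contradiction cannot be closed. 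What is true (Dritschel) is that a \emph{strictly} positive trigonometric polynomial on the torus is a sum of squares of possibly \emph{higher} degree; accordingly Dumitrescu states the multivariate Bounded Real Lemma with a relaxation order, i.e., the Gram matrix may need to be larger than the minimal size, and the ``only if'' direction at the minimal size is not a theorem. So this direction cannot be completed by your route (or any route) as stated; at best one obtains it for a sufficiently large relaxation order, or under additional hypotheses on $H$.
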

Applying the Bounded Real Lemma, we can now show that the constraint
of (\ref{eq:opt1}) can be recast as the intersection of a cone
of semi-definite matrix with an affine hyperplane:
\begin{lem}
\label{Lemma:1}$\left\Vert F_{N}^{*}\alpha\right\Vert _{\infty}\leq1$
\textbf{if and only if} there exists a Hermitian matrix $Q\in\mathbb{C}^{(2N+1)^{2}\times(2N+1)^{2}}$
such that 
\begin{equation}
\begin{bmatrix}Q & \grave{h}\\
\grave{h}^{*} & 1
\end{bmatrix}\succeq0,\quad tr\left(\Theta_{k,\ell}Q\right)=\delta_{k\text{,\ensuremath{\ell}}},\quad (k,\ell)\in\mathcal{H},\label{eq:SDP}
\end{equation}
where {$\mathcal{H}$ is a half plane,} {$\grave{h}$ $\in\mathbb{C}^{\left(2N+1\right)^{2}}$
}{is a column stacked vector of $\grave{h}_{k,\ell}$
given by}
\begin{equation}
\grave{h}_{k,\ell}=\sum_{n=0}^{N}h_{n,k,\ell},\qquad h_{n,k,\ell}:=\begin{cases}
A_{n,k}\alpha_{n,k}\beta_{n,k,\ell} & k,\ell\in[-n,n],\\
0 & o.w.
\end{cases}\label{eq:h}
\end{equation}
{$\beta_{n,k,\ell}$ are given by the unique trigonometric decomposition
}{{of the associated Legendre polynomial of order $n$ and
degree $k$, i.e. $P_{n,k}\left(cos\theta\right)=\sum_{\ell=-n}^{n}\beta_{n,k,\ell}e^{j\ell\theta}$,
and $A_{n,k}$ are given in (\ref{eq:A}).}}\end{lem}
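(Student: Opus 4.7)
The plan is to reduce the spherical-harmonic bound $\|F_N^*\alpha\|_\infty \le 1$ to a bivariate trigonometric-polynomial bound on the torus, at which point the Bounded Real Lemma yields the SDP characterization essentially automatically. First, I would substitute the explicit form $Y_{n,k}(\phi,\theta)=A_{n,k}e^{jk\phi}P_{n,k}(\cos\theta)$ into $F_N^*\alpha$ and expand each associated Legendre factor via its Fourier series $P_{n,k}(\cos\theta)=\sum_{\ell=-n}^n \beta_{n,k,\ell}e^{j\ell\theta}$. Interchanging the order of summation then yields
\[
F_N^*\alpha(\phi,\theta)=\sum_{k,\ell=-N}^{N}\grave h_{k,\ell}\,e^{j(k\phi+\ell\theta)},
\]
with $\grave h_{k,\ell}=\sum_{n}A_{n,k}\alpha_{n,k}\beta_{n,k,\ell}$ (summed over those $n\le N$ with $|k|,|\ell|\le n$), which is precisely formula (\ref{eq:h}).

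Second, I would argue that the spherical sup-norm $\sup_{\xi\in\mathbb S^2}|F_N^*\alpha(\xi)|$ equals the toroidal sup-norm of the trigonometric polynomial $H(\phi,\theta):=\sum_{k,\ell}\grave h_{k,\ell}e^{j(k\phi+\ell\theta)}$. One direction is immediate since the sphere parametrization embeds into the torus. For the reverse inequality, I would exploit the fact that each $P_{n,k}(\cos\theta)$ is an even function of $\theta$, so $\beta_{n,k,-\ell}=\beta_{n,k,\ell}$ and therefore $H(\phi,-\theta)=H(\phi,\theta)$. Combined with $2\pi$-periodicity in both variables, this symmetry shows that every value of $H$ on $\mathbb T^2$ is already attained on the canonical sphere parametrization $[0,2\pi)\times[0,\pi]$, so the two suprema agree.

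Third, to bring the polynomial into the causal form required by the stated Bounded Real Lemma, I multiply $H$ by the unimodular factor $e^{-jN\phi}e^{-jN\theta}$, which leaves $|H|$ unchanged and shifts the frequency index set from $[-N,N]^2$ to $[0,2N]^2$, producing a causal bi-degree $2N$ polynomial whose coefficient vector is (a permutation of) $\grave h$. Applying the Bounded Real Lemma with this bi-degree delivers a Hermitian positive-semidefinite matrix $Q$ of size $(2N+1)^2\times(2N+1)^2$, the Schur-complement condition on $\grave h$, and the trace identities $\mathrm{tr}(\Theta_{k,\ell}Q)=\delta_{k,\ell}$ for $(k,\ell)$ in a half-space $\mathcal H\subset\mathbb Z^2$, which are precisely the constraints (\ref{eq:SDP}).

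The main obstacle is the passage from the sphere to the torus in the second step: the spherical parametrization is not onto $\mathbb T^2$, so the two sup-norms are a priori unequal, and closing this gap requires the evenness of the associated Legendre polynomials in $\theta$, which is a genuine structural property of the spherical harmonic basis rather than a formal manipulation. Once that symmetry is in hand, the rest of the argument is bookkeeping: carefully identifying the shifted coefficient vector with $\grave h$ so that the Toeplitz-type trace constraints line up with the fixed half-space $\mathcal H$.
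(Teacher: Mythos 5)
Your overall route is the same as the paper's: expand $F_{N}^{*}\alpha$ into the bivariate trigonometric polynomial $H(\phi,\theta)=\sum_{k,\ell}\grave h_{k,\ell}e^{j(k\phi+\ell\theta)}$ via the Fourier expansion of $P_{n,k}(\cos\theta)$, shift by a unimodular exponential to make it causal of bi-degree $2N$, and invoke the Bounded Real Lemma; the paper's proof consists of exactly these two steps and nothing more. The one place where you go beyond the paper --- arguing that the supremum over the sphere's parameter domain $[0,2\pi)\times[0,\pi]$ equals the supremum over the full torus, which is genuinely needed for the direction ``$\Vert F_{N}^{*}\alpha\Vert_{\infty}\le1$ implies SDP feasibility'' and which the paper silently skips --- contains an error. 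The trigonometric polynomial representing $P_{n,k}(\cos\theta)=(\sin\theta)^{|k|}L_{n}^{(k)}(\cos\theta)$ on $[0,\pi]$ is \emph{odd} in $\theta$ when $|k|$ is odd (the even extension $\theta\mapsto|\sin\theta|^{|k|}L_{n}^{(k)}(\cos\theta)$ is not a trigonometric polynomial for odd $|k|$), so $\beta_{n,k,-\ell}=(-1)^{|k|}\beta_{n,k,\ell}$ and your claimed identity $H(\phi,-\theta)=H(\phi,\theta)$ fails in general. The correct symmetry is $H(\phi,-\theta)=H(\phi+\pi,\theta)$: the reflection $\theta\mapsto-\theta$ multiplies the $(n,k)$ term by $(-1)^{|k|}$, while the shift $\phi\mapsto\phi+\pi$ multiplies it by $e^{jk\pi}=(-1)^{k}=(-1)^{|k|}$, so the two effects coincide term by term. (This is just the statement that $(\phi,-\theta)$ and $(\phi+\pi,\theta)$ parametrize the same point of $\mathbb{S}^{2}$.) With this corrected symmetry your conclusion stands: every value of $H$ on the torus is already attained on $[0,2\pi)\times[0,\pi]$, the two sup-norms coincide, and the remaining bookkeeping with the index shift and the half-space trace constraints goes through as you describe.
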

\begin{proof}
Fix a point on the two-dimensional sphere $\xi:=(\theta,\phi)\in\mathbb{S}^{2}.$
A spherical harmonic polynomial of degree $N$ is of the form 
\begin{eqnarray}
F_{N}^{*}\alpha\left(\xi\right) & = & \sum_{n=0}^{N}\sum_{k=-n}^{n}\alpha_{n,k}Y_{n,k}\left(\xi\right)\label{eq:FNa}\\
 & = & \sum_{n=0}^{N}\sum_{k=-n}^{n}\alpha_{n,k}A_{n,k}e^{jk\phi}P_{n,k}\left(\cos\theta\right).\nonumber 
\end{eqnarray}
$P_{n,k}\left(cos\theta\right)$ takes the form of 
\[
P_{n,k}\left(cos\theta\right)=\left(sin\theta\right)^{\left|k\right|}L_{n}^{\left(k\right)}\left(cos\theta\right),
\]
where $L_{n}^{\left(k\right)}$ is the $k^{th}$ derivative of the
Legendre polynomial of degree $n$. Hence, $P_{n,k}\left(cos\theta\right)$
is a trigonometric polynomial of degree $n$, and has an expansion
$P_{n,k}\left(cos\theta\right)=\sum_{\ell=-n}^{n}\beta_{n,k,\ell}e^{j\ell\theta}$
for\textbf{ }unique\textbf{ }coefficients$\left\{ \beta_{n,k,\ell}\right\} _{\ell}$.
Consequently, we write (\ref{eq:FNa}) as 
\begin{eqnarray}
F_{N}^{*}\alpha\left(\xi\right) & = & \sum_{n=0}^{N}\sum_{k=-n}^{n}\sum_{\ell=-n}^{n}\alpha_{n,k}A_{n,k}\beta_{n,k,\ell}e^{j\ell\theta}e^{jk\phi}\nonumber \\
 & = & \sum_{k=-N}^{N}\sum_{\ell=-N}^{N}\grave{h}_{k,\ell}e^{j\ell\theta}e^{jk\phi},\label{eq:Sh_tri}
\end{eqnarray}
where $\grave{h}_{k,\ell}$ is given in (\ref{eq:h}). Now, The Bounded
Real Lemma can be directly applied in our case, since the polynomial
$e^{jN\left(\theta+\phi\right)}F_{N}^{*}\alpha\left(\xi\right)$ is
causal and has the same magnitude as $F_{N}^{*}\alpha\left(\xi\right).$
This completes the proof.
\end{proof}
Using Lemma \ref{Lemma:1}, the dual problem (\ref{eq:opt1}) is equivalent
to 
\begin{equation}
max_{\alpha,Q}\left\langle y,\alpha\right\rangle \quad s.t.\quad \mbox{\emph{equation $\left(\ref{eq:SDP}\right)$ is satisfied}}.\label{eq:SDP_formula}
\end{equation}
This is a semi-definite programming optimization problem, which can
be solved using off-the-shelf software. Note that there are $\left(2N+2\right)^{4}/2$
decision variables, without any dependence on the solution accuracy. 

Define $q\left(\xi\right):=F_N^*\alpha(\xi)$, where $\alpha(\xi)$ is the solution of (\ref{eq:SDP_formula}). Denote the roots of the polynomials $1-q\left(\xi\right)$
and $1+q\left(\xi\right)$ by $\hat{\Xi}:=\left\{\hat{\xi}_m\right\}$ and recall that we know that $q(\xi)$ takes the values $sign(c_{m})$
at $\xi_{m}\in\Xi$. Consequently, $\Xi\subseteq\hat{\Xi}$. Once we find the support, we can find
the unknown coefficients by solving the least square system: 
\begin{equation}
\sum_{\hat{\xi}_{m}\in\hat{\Xi}}\hat{c}_{m}\overline{Y}_{n,k}\left(\hat{\xi}_{m}\right)=y_{n,k},\quad n=0,\dots,N,\thinspace k=-n,\dots,n.\label{eq:LS}
\end{equation}
We note that although the detected support may be larger than the actual support, the least-square solution (\ref{eq:LS}) will set the values of
the signal to zero at these points. 

The sole situation in which our algorithm fails to recover the signal
is when $q\left(\xi\right)=1$ or $q\left(\xi\right)=-1$ for all
$\xi$. However, this situation will rarely occur if (\ref{eq:SDP_formula})
is solved using standard interior point method. More precisely, according
to the analysis in Section 4 of \cite{candes2013towards}, $q\left(\xi\right)$
will not get a constant value if \emph{there exists} a solution to
(\ref{eq:SDP_formula}) such that $\vert q\left(\xi\right)\vert<1$ for some
$\xi\in S^{2}$. Indeed, in the course of our experiments this situation
never occurred.

\section{Proof of Theorem \ref{Theorem2} and Corollary \ref{Corollary1}} \label{sec:proof_noise}

Let $\hat{f}\in D_L$ be the solution of the optimization problem {(\ref{6})}, 
with $\Vert \hat{f}\Vert_{TV}=\Vert \hat{f}\Vert_{\ell_1}\leq\left\Vert f\right\Vert_{\ell_1}=\left\Vert f\right\Vert_{TV}$ 
and let $h\in D_L$, $h(\xi):=\hat{f}(\xi)-{f}(\xi)$. We decompose $h$ as 
\[
h=h_{\Xi}+h_{\Xi^{C}},
\]
where $h_{\Xi}$ and $h_{\Xi^C}$ are the parts of $h$ with support in $\Xi$ and $\Xi^C$, respectively. If $h_{\Xi}=0$,
then $h=0$. Otherwise, $h_{\Xi^{C}}\neq0$ which implies the contradiction
$\Vert \hat{f}\Vert _{\ell_1}>\Vert f\Vert _{\ell_1}$.
Using the notation $P_N:=\left(F_N\right)^*F_N$, we decompose the measure $h$ into `low' and `high' resolution parts,
\[
h_{L}=P_{N}hd\xi,\quad h_{H}=h-h_{L},
\]
where $d\xi$ is the usual surface area measure on the sphere, so that $\Vert h\Vert_{\ell_1}=\Vert h\Vert_{TV}\leq \Vert h_L\Vert_{TV}+\Vert h_H\Vert_{TV}$.

We commence by assuming that $\Vert \eta\Vert_2\leq \varepsilon$. This in turn implies that
\[
\Vert F_{N}f-y\Vert_{\ell_2}=\Vert\eta\Vert_{\ell_2}\leq\varepsilon.
\]
Using the fact that $\{Y_{n,k}\}$ is an orthobasis and then (\ref{6}) we get
\begin{align*}
\Vert P_{N}h\Vert_{L_2}&=\Vert F_{N}h\Vert_{\ell_2}\\
&\leq\Vert y-F_{N}f\Vert_{\ell_2}+\Vert y-F_{N}\hat{f}\Vert_{\ell_2} \\ & \leq 2\varepsilon.
\end{align*}
Consequently, we have the following estimation for the low resolution part $h_L$:
\[
\Vert h_L\Vert _{TV}= \Vert P_{N}h\Vert_{L_1}\leq 2\sqrt{\pi}\Vert P_{N}h\Vert_{L_2}\leq  4\sqrt{\pi}\varepsilon.
\]

Next, we need to estimate the `high frequency' part $h_H$. We denote by $h_{H,\Xi}$ and $h_{H,\Xi^{C}}$ the parts of $h_{H}$
with support on $\Xi$ and $\Xi^{C}$, respectively. By assumption, the support of $f$, $\Xi:=\left\{ \xi_{m}\right\} $
satisfies the separation condition of Definition \ref{def:separation}.
Therefore, by Lemma  \ref{Lemma:q}, there exists a polynomial $q\in V_N$ such that
$q\left(\xi_{m}\right)=sgn\left(h_{H}\left(\xi_{m}\right)\right)$ for all $\xi_m\in \Xi$ and
$\left|q(\xi)\right|<1$ for all $\xi\in \mathbb{S}_{L}^{2} \backslash\Xi$. By construction,
\[
\left\langle q,h_{H}\right\rangle =\left\langle P_{N}q,h_{H}\right\rangle =\left\langle q,P_{N}\left(h-h_{L}\right)\right\rangle =0.
\]
So,
\[
0=\left\langle q,h_{H}\right\rangle =\left\langle q,h_{H,\Xi}\right\rangle +\left\langle q,h_{H,\Xi^{C}}\right\rangle \geq\left\Vert h_{H,\Xi}\right\Vert _{TV}-q_{max}^C\left\Vert h_{H,\Xi^{C}}\right\Vert _{TV},
\]
where 
\begin{equation*}
q_{max}^C:=\max_{\xi\in\mathbb{S}_{L}^{2}\backslash\Xi}\left|q\left(\xi\right)\right|.
\end{equation*}

Since $\hat{f}$ has minimal $TV$ norm in $D_L$,
\begin{eqnarray*}
\left\Vert f\right\Vert _{TV} & \geq & \left\Vert f+h\right\Vert _{TV}\geq\left\Vert f+h_{H}\right\Vert _{TV}-\left\Vert h_{L}\right\Vert _{TV}\\
 & \geq & \left\Vert f\right\Vert _{TV}+\left\Vert h_{H,\Xi^{C}}\right\Vert _{TV}-\left\Vert h_{H,\Xi}\right\Vert _{TV}-\left\Vert h_{L}\right\Vert _{TV}\\
 & \geq & \left\Vert f\right\Vert _{TV}+\left(1-q_{max}^C\right)\left\Vert h_{H,\Xi^{C}}\right\Vert _{TV}-\left\Vert h_{L}\right\Vert _{TV}.
\end{eqnarray*}
Hence, 
\begin{eqnarray*}
\left\Vert h\right\Vert _{TV} & \leq & \left\Vert h_{L}\right\Vert _{TV}+\left\Vert h_{H}\right\Vert _{TV}\leq\left\Vert h_{L}\right\Vert _{TV}+\left\Vert h_{H,\Xi}\right\Vert _{TV}+\left\Vert h_{H,\Xi^{C}}\right\Vert _{TV}\\
 & \leq & \left\Vert h_{L}\right\Vert _{TV}+\left(1+q_{max}^C\right)\left\Vert h_{H,\Xi^{C}}\right\Vert _{TV}\\
 & \leq & \left\Vert h_{L}\right\Vert _{TV}+\frac{\left(1+q_{max}^C\right)}{\left(1-q_{max}^C\right)}\left\Vert h_{L}\right\Vert _{TV}\\
 & \leq & \frac{2\left\Vert h_{L}\right\Vert _{TV}}{\left(1-q_{max}^C\right)}\leq\frac{8\sqrt{\pi}\varepsilon}{\left(1-q_{max}^C\right)}.
\end{eqnarray*}

In order to estimate $q_{max}^C$, we make use of several results from
\cite{bendory2013exact}. Let $\xi\in\mathbb{S}_L^2\backslash\Xi$.
We first handle the case where $1/L\le d\left( {\xi ,\xi_{m} } \right)\le 
s/N$, for some $\xi_m \in \Xi$, where the constant $s>0$ is determined by Lemma 4.3 in \cite{bendory2013exact}. We provide an 
upper bound for $|q(\xi)|$, by analyzing the Taylor remainder of 
the univariate function $F\left( \theta \right):=q\left( {r\left( \theta 
\right)} \right)$, with $r\left( \theta \right):=\left( {1-\theta 
\mathord{\left/ {\vphantom {\theta {d\left( {\xi ,\xi_{m} } \right)}}} 
\right. \kern-\nulldelimiterspace} {d\left( {\xi ,\xi_{m} } \right)}} 
\right)\xi_{m} +\left( {\theta \mathord{\left/ {\vphantom {\theta {d\left( 
{\xi ,\xi_{m} } \right)}}} \right. \kern-\nulldelimiterspace} {d\left( {\xi 
,\xi_{m} } \right)}} \right)\xi $, $0\le \theta \le d\left( {\xi ,\xi_{m} 
} \right)$. By construction, $q\left( {\xi_{m} } \right)=\pm 1$, so without 
loss of generality, let us assume $F\left( 0 \right)=q\left( {\xi_{m} } 
\right)=1$. Also, by the construction in \cite{bendory2013exact}, in this case, $q$ has a local 
maximum at $\xi_{m} $ and so ${F}'\left( 0 \right)=0$. Next, by Lemma 4.3 in \cite{bendory2013exact}
there exists an absolute constant $c>0$, such that ${F}''\left( \theta 
\right)\le -cN^{2}$, for all $0\le \theta \le s/N$. Therefore, we can apply 
the Taylor Remainder theorem to bound 
\begin{eqnarray*}
 q\left( \xi \right)&=&F\left( {d\left( {\xi ,\xi_{m} } \right)} \right) \\ 
 &\le & 1-\frac{cN^{2}}{2}d\left( {\xi ,\xi_{m} } \right)^{2} \\ 
 &\le & 1-\frac{cN^{2}}{2}\frac{1}{L^{2}}=1-\frac{\tilde{{c}}}{SRF^{2}}, \\ 
\end{eqnarray*}
where SRF is defined by (\ref{eq:SRF}). We now prove the case $1/L\le s/N\le d\left( {\xi ,\xi_{m} } \right)$, 
$\forall \xi_{m} \in \Xi$. By Lemma 4.4 in \cite{bendory2013exact}, there exists 
$0<\delta <s$, such that $\left| {q\left( \xi \right)} \right|\le 
{\left( {1+\delta } \right)} \mathord{\left/ {\vphantom {{\left( 
{1+\delta } \right)} {\left( {1+s} \right)}}} \right. 
\kern-\nulldelimiterspace} {\left( {1+s} \right)}$. Therefore, if $L$ is 
chosen sufficiently large, such that 
\[
\left( {\frac{N}{L}} \right)^{2}\le \tilde{{c}}^{-1}\frac{s-\delta 
}{s+1}\quad ,
\]
then,
\[
\left| {q\left( \xi \right)} \right|\le \frac{1+\delta }{1+s}\le 
1-\frac{\tilde{{c}}}{SRF^{2}}.
\]
Applying the upper bound on $q_{max}^C$ gives
\[
\left\Vert h\right\Vert _{\ell_1} = \left\Vert h\right\Vert _{TV}\leq\frac{8\sqrt{\pi}\varepsilon}{\tilde{c}}SRF^{2}.
\]
This concludes the proof of Theorem \ref{Theorem2}

In order to prove Corollary \ref{Corollary1}, we assume that $\eta_{n,k}$ are iid entries $\mathcal{N}\sim(0,\sigma^{2})$. 
Fix $\gamma>0$ and let us denote $\varepsilon:=(N+1)(1+\gamma)\sigma$. Since $\Vert\eta\Vert_{\ell_2}^{2}$ has a $\chi^{2}$
distribution with $(N+1)^{2}$ degrees of freedom, we have (see a
comment to Lemma 1 in Section 4.1 in \cite{laurent2000adaptive})
\[
Prob\left(\Vert\eta\Vert_{\ell_2}>\varepsilon\right)\leq e^{-\frac{1}{2}(N+1)^{2}\gamma^{2}}.
\]
Therefore, 
\[
\Vert F_{N}f-y\Vert_{\ell_2}=\Vert\eta\Vert_{\ell_2}\leq\varepsilon,
\]
with probability of at least $1-e^{-\frac{1}{2}(N+1)^{2}\gamma^{2}}$.
The rest of the proof is identical to the proof of Theorem \ref{Theorem2}.

\section{Numerical Experiments\label{sec:Numerical-Experiments}}

This section is devoted to extensive numerical experiments, examining
both accuracy and complexity of Algorithm \ref{alg1}. The experiments were conducted in Matlab using CVX
\cite{cvx}, which is the standard modeling system for convex optimization.
The Matlab code is available on \cite{code}.

\begin{figure}
\begin{centering}
\includegraphics[scale=0.5]{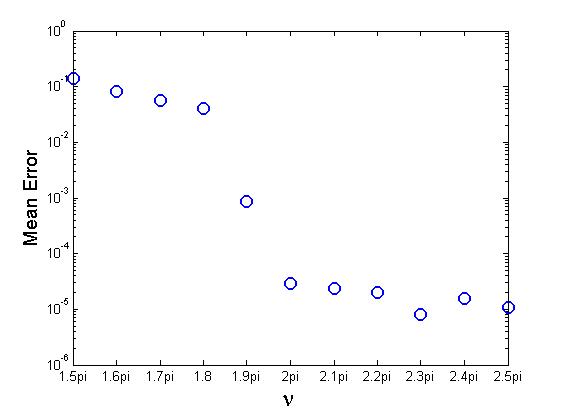} 
\par\end{centering}
\protect\caption{The mean recoery error (in logarithmic scale) as a function of $\nu$ over 20 simulations. To be clear, by error we merely mean the distance on the sphere between the true and the estimated support. \label{fig:separation}}
\end{figure}

The signals were generated in the following two stages:
\begin{itemize}
\item Random locations on the sphere were drawn uniformly, sequentially added to the signal's
support, while maintaining the separation condition of Definition \ref{def:separation}. In the non-negative
case, the support was determined by $N$ random locations (no separation
is needed). 
\item Once the support was determined, the amplitudes were drawn randomly
from an iid normal distribution with standard deviation of $SD=10$.
In the non-negative case, the amplitudes were drawn from a uniform distribution
on the interval $[0,10]$.
\end{itemize}

The first experiment aims to estimate the separation constant $\nu$ from Definition \ref{def:separation}. For each values of $\nu$,
20 simulations were conducted and the error in the support localization was calculated. As can be seen in Figure \ref{fig:separation}, starting from $\nu=2\pi$ the localization error is negligible. The result suggests that there exists a sharp phase transition for the relationship between the recovery error and $\nu$ (see \cite{moitra2014threshold} for the analysis of this phenomenon for signals defined on the circle). 
In the presence of noise, we found that increasing the separation constant to $\nu=2.5\pi$ improves the results significantly.   

An example to the performance of the algorithm is presented in Figure
\ref{fig:result1}. Figure \ref{fig:result1a} presents the low resolution
measurements $P_{N}f:=F_{N}^{*}F_{N}f$ for $N=10$, and the recovered
signal is presented in Figure \ref{fig:result1b}%
\footnote{The signal is presented on a grid for visualization purpose only.%
}. We note that the recovered signal is identical to any visible accuracy to the original signal. As mentioned in Algorithm \ref{alg1}, the support is determined
as the roots of the polynomials $1\pm q(\xi)=1\pm F_{N}^{*}\alpha(\xi),$ where
$\alpha$ is the solution of the semi-definite program (\ref{eq:SDP_formula}).
Figure \ref{fig:q} presents $q\left(\xi\right)$ for the example
of Figure \ref{fig:result1}. 

The roots of $1\pm q\left(\xi\right)$ were located as follows. The
sphere was divided into small cubes, and the minimum of the function
at each cube was calculated using standard optimization tools. The
minimum points with values below $10^{-5}$ were considered
as roots. This technique exploits the fact that the signal's support
is well separated. Finer segmentation of the sphere results in a better
localization in the cost of computation time.

This experiment was conducted 10 times for $N=5,8,10$. Table \ref{tab:1}
shows the error in estimating the support locations. Figure \ref{fig:non-negative}
shows an exact recovery of a clustered non-negative signal. As aforementioned,
the separation condition is not necessary in this case. 

\begin{figure}
\begin{minipage}[t]{0.45\columnwidth}%
\subfloat[The low resolution measurements $P_{N}f$, for $N=10.$\label{fig:result1a}]{
\includegraphics[scale=0.4]{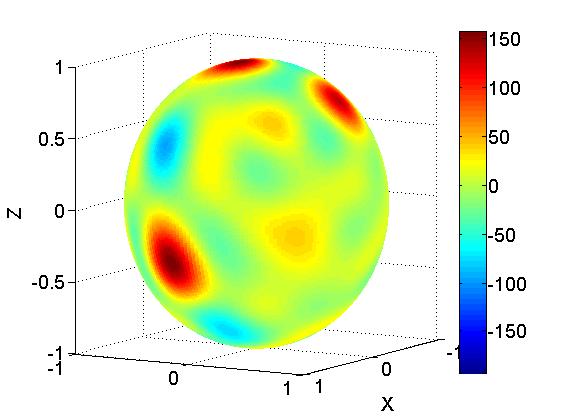}

}%
\end{minipage}%
\begin{minipage}[t]{0.45\columnwidth}%
\subfloat[The recovered signal $f$, for $N=10.$\label{fig:result1b}]{
\includegraphics[scale=0.4]{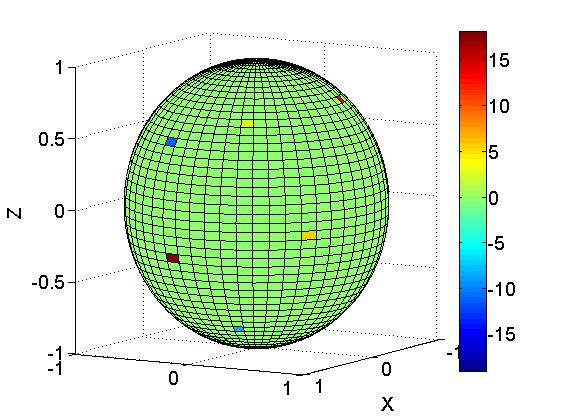}

}%
\end{minipage}

\begin{minipage}[t]{0.45\columnwidth}%
\subfloat[The low resolution measurements $P_{N}f$, presented on a plane.]{
\includegraphics[scale=0.4]{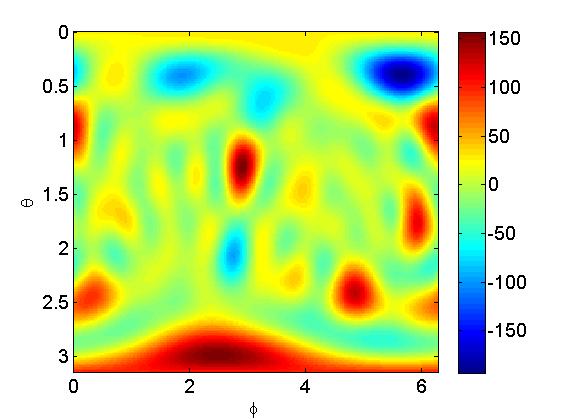}

}%
\end{minipage}%
\begin{minipage}[t]{0.45\columnwidth}%
\subfloat[The recovered signal $f$, presented on a plane.]{
\includegraphics[scale=0.4]{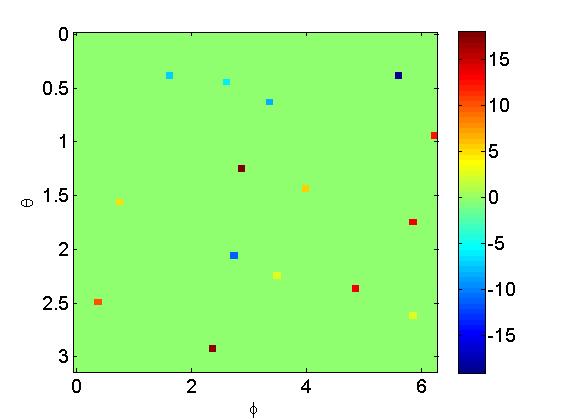}

}%
\end{minipage}

\centering{}\protect\caption{Super-resolution on the sphere using the Algorithm \ref{alg1}, for $N=10$.
The signal is presented on a grid for visualization only.\label{fig:result1}}
\end{figure}

\begin{center}
\begin{figure}
\begin{centering}
\begin{minipage}[t]{0.45\columnwidth}%
\subfloat[The function $q(\xi)=F_{N}^{*}\alpha(\xi)$.]{
\includegraphics[scale=0.4]{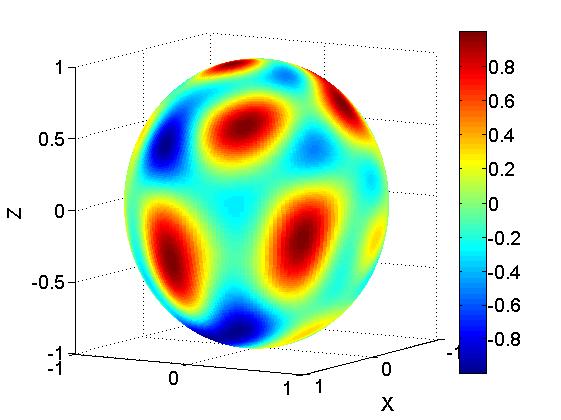}

}%
\end{minipage}\hfill{}%
\begin{minipage}[t]{0.45\columnwidth}%
\subfloat[A single line of $q(\xi)$ for a constant $\theta=2.257$ (blue) verses
the appropriate values of $sign(f)$ (red).]{
\includegraphics[scale=0.4]{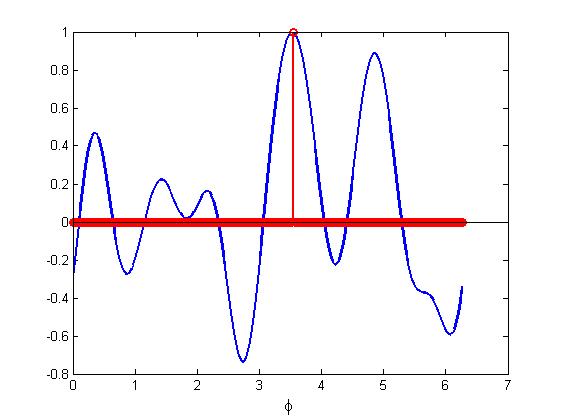}

}%
\end{minipage}
\par\end{centering}

\centering{}\protect\caption{The function $q(\xi)=F_{N}^{*}\alpha(\xi)$ for the example presented
in Figure \ref{fig:result1}. \label{fig:q} }
\end{figure}

\par\end{center}

\begin{table}
\begin{centering}
\begin{tabular}{|c|c|c|c|}
\hline 
\multicolumn{1}{|c|}{N} & 5 & 8 & 10\tabularnewline
\hline 
\hline 
Average error & $8.1267\times10^{-5}$ & $8.1826\times10^{-5}$ & $9.0404\times10^{-5}$\tabularnewline
\hline 
Max error & $2.163\times10^{-4}$ & $1.9\times10^{-3}$ & $3.3\times10^{-3}$\tabularnewline
\hline 
\end{tabular}
\par\end{centering}

\protect\caption{The localization error of Algorithm \ref{alg1} for $N=5,8,10.$ For
each value of $N$, the experiment was conducted 10 times. \label{tab:1}}
\end{table}

\begin{figure}
\begin{centering}
\begin{minipage}[h]{0.32\columnwidth}%
\subfloat[The original signal $f$ with 9 non-zero values.]{
\includegraphics[scale=0.3]{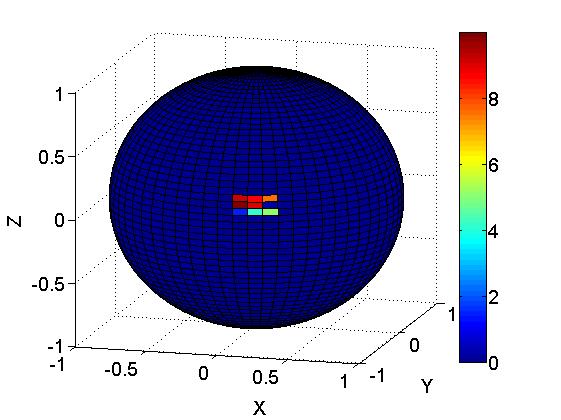}

}%
\end{minipage}%
\begin{minipage}[h]{0.32\columnwidth}%
\subfloat[The signal projection $P_{N}f$ for $N=9$. ]{
\includegraphics[scale=0.3]{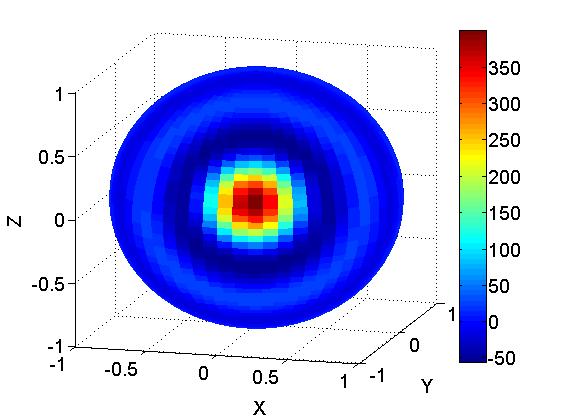}

}%
\end{minipage}%
\begin{minipage}[h]{0.32\columnwidth}%
\noindent \subfloat[The recovered signal.]{\noindent
 \includegraphics[scale=0.3]{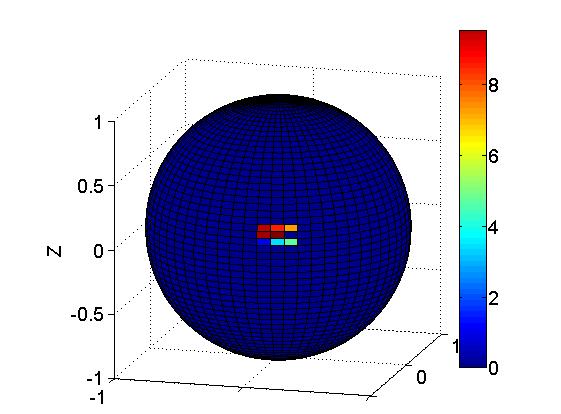}

}%
\end{minipage}
\par\end{centering}
\protect\caption{Recovery of a clustered non-negative signal. The signals are presented
on a grid for convenient visualization. \label{fig:non-negative}}
\end{figure}

 In the discrete setting, both Algorithm \ref{alg1} and $\ell_1$ minimization recover the signal exactly in a noise-free environment. In order to compare the algorithms, we applied both of them in the discrete setting, using the grid
\[
\mathbb{S}_{L}^{2}:=\left\{ \left(\phi_{q},\theta_{p}\right)=\left(2\pi\frac{q}{L},\pi\frac{p}{L}\right)\thinspace:\thinspace\left(q,p\right)\subset\left[0,1,\dots,L-1\right]\times\left[0,1,\dots,L-1\right]\right\} .
\]
Recall that the complexity of Algorithm \ref{alg1}  depends solely on
$N$, while the complexity of the $\ell_{1}$ minimization depends
on both $N$ and $L$. Therefore, finer grid results in a longer computation
time. 
Figure \ref{fig:discrete} shows the computation time of $\ell_{1}$ minimization, as function
of the SRF, compared with the average computation time of Algorithm \ref{alg1}. The $\ell_{1}$ minimization was solved using CVX \cite{cvx}.
As can be seen, the computation time of $\ell_{1}$ minimization grows
(approximately) linearly with the SRF. In the last section we discuss some ideas how to speed up our algorithm.

\begin{figure}
\begin{centering}
\includegraphics[scale=0.5]{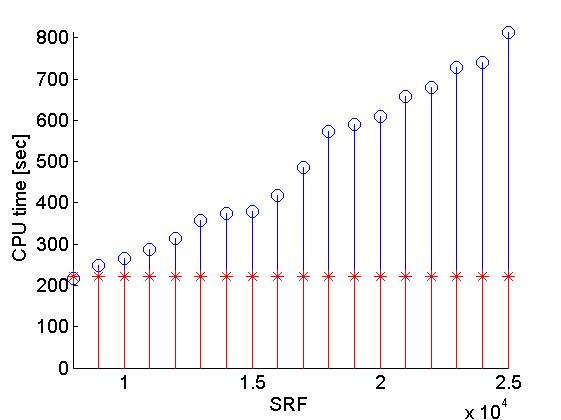}
\par\end{centering}

\protect\caption{CPU time of $\ell_1$ minimization and Algorithm \ref{alg1} in the discrete setting with
$N=5$. The red stars present the average computation time of Algorithm \ref{alg1} over 100
experiments, and the blue circles present the average computation
time (over 10 experiments) of the $\ell_{1}$ minimization as a function
of the SRF.\label{fig:discrete}}
\end{figure}

In the noisy setting, we considered an additive noise with  iid entries $\mathcal{N}\sim(0,\sigma^{2})$.
Our experiments show that although Theorem \ref{Theorem2} holds  for recovery by $\ell_1$ minimization (\ref{6}), Algorithm \ref{alg1} is also robust to noise. In Figure \ref{fig:result_noise}, we show the recovery error of Algorithm \ref{alg1} with $\varepsilon=(1+N)(1+\gamma)\sigma$ for various of $\sigma$ and SRF values in the discrete setting. As can be seen, the error degrades moderately as $\sigma$ and SRF increase. 

\begin{figure}
\begin{minipage}[t]{0.45\columnwidth}%
\subfloat[The recovery error as a function of the noise standard deviation, for $SRF=10$.   \label{fig:noisea}]{\includegraphics[scale=0.4]{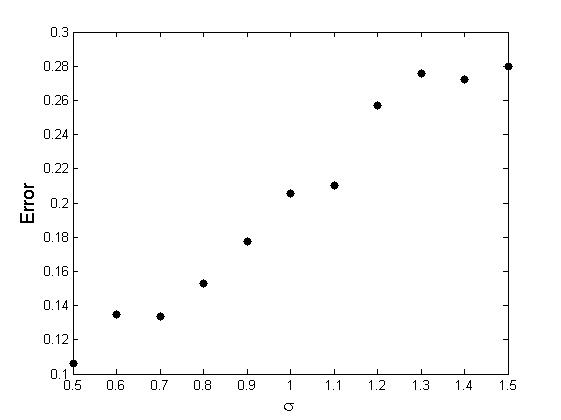}

}%
\end{minipage}%
\begin{minipage}[t]{0.45\columnwidth}%
\subfloat[The recovery error as a function of the SRF, for $\sigma=0.5$. \label{fig:result1b-1}]{\includegraphics[scale=0.4]{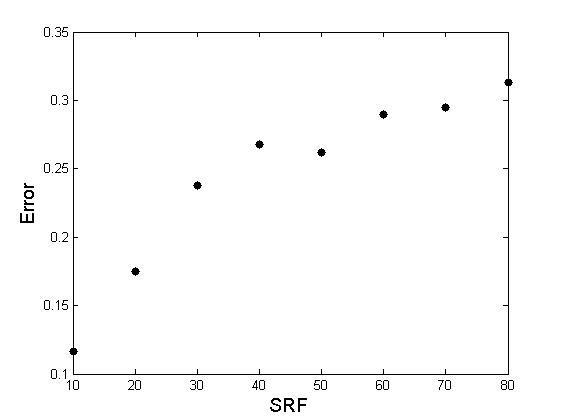}

}%
\end{minipage}

\centering{}\protect\caption{For each value of $\sigma$ and SRF, 10 experiments were conducted using Algorithm \ref{alg1} with $N=8$, $\gamma=1$, and $\varepsilon=(1+N)(1+\gamma)\sigma$. The figure presents the average recovery error. 
By error, we merely mean the distance on the sphere between the true and the estimated supports. \label{fig:result_noise}}
\end{figure}

\section{ the necessity of the separation condition} \label{sec:necessity}

In \cite{bendory2013exact}, we established that the separation condition is a \emph{sufficient}
condition for exact recovery of a signal on a sphere from its projection
onto $V_{N}$. Nonetheless, without separation the recovery task is
ill-posed. Figure \ref{fig:cluster} shows an example for Algorithm \ref{alg1} failure for clustered signals. The experiment was conducted
with minimal separation of $\frac{2.5\pi}{N^{1.5}}$, and $N=10.$
The points are scattered on $1/N$ of the sphere, so the total number
of locations is similar to the experiment presented in Figure \ref{fig:result1}.

\begin{figure}
\begin{minipage}[t]{0.45\columnwidth}%
\subfloat[Clustered signal with separation of $\frac{2.5\pi}{N^{1.5}}$. ]{
\includegraphics[scale=0.4]{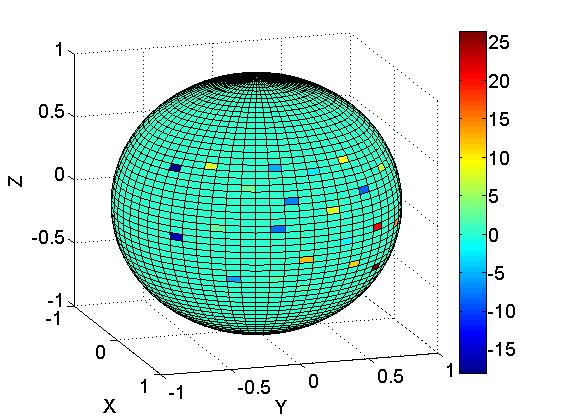}

}%
\end{minipage}%
\begin{minipage}[t]{0.45\columnwidth}%
\subfloat[The recovered signal.]{
\includegraphics[scale=0.4]{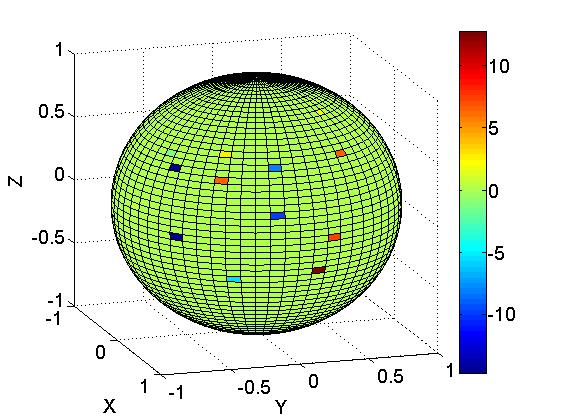}

}%
\end{minipage}

\protect\caption{Unsuccessful recovery through Algorithm \ref{alg1} of a clustered
signal, $N=10$. \label{fig:cluster}}
\end{figure}

The recovery failure of the clustered signal can be explained by the analysis
in \cite{candes2013towards}, where the authors showed that clustered
signals cannot be recovered \emph{by any method} from their low frequency
coefficients in the presence of minuscule noise level. They used prolate
spheroidal sequences \cite{slepian1978prolate}, and showed that asymptotically,
even for small $SRF$ values, there will always exist an irretrievable
signal. Furthermore, as the $SRF$ increases most of the information
in a clustered signal is lost by the low-pass operation. 

The formulation (\ref{eq:Sh_tri}) reveals that spherical harmonics
expansion in $\mathbb{S}^{2}$ is a unique combination of bivariate
trigonometric polynomials. Hence, the aforementioned conclusions hold
for the spherical harmonic case as well, and we conclude that super-resolution
on the sphere is ill-posed without a minimum separation condition. 

To make this argument clear, we give here a simple example. Consider
a signal of the form 
\[
f=\delta_{\xi}-\delta_{\xi_{\epsilon}},
\]
for some locations $\xi:=(\theta,\phi),$ $\xi_{\epsilon}:=(\phi_{\epsilon},\theta_{\epsilon})$,
and $d\left(\xi,\xi_{\epsilon}\right)=\epsilon$ . In this case, the
measurements are given by 
\begin{eqnarray*}
y_{n,k} & = & \overline{Y}_{n,k}\left(\xi\right)-\overline{Y}_{n,k}\left(\xi+\epsilon\right)\\
 & = & A_{n,k}\left[e^{-jk\phi}P_{n,k}\left(\cos\theta\right)-e^{-jk\phi_{\epsilon}}P_{n,k}\left(\cos\theta_{\epsilon}\right)\right]\\
 & = & A_{n,k}\sum_{\ell=-n}^{n}\beta_{n,k,l}\left[e^{-jk\phi}e^{-j\ell\theta}-e^{-jk\phi_{\epsilon}}e^{-j\ell\theta_{\epsilon}}\right].
\end{eqnarray*}
When the spikes are close, i.e. $\epsilon\rightarrow0$, we get $\left|y_{n,k}\right|\rightarrow0$
\emph{for any} $\left(n,k\right).$ As a conclusion, if the spikes
are sufficiently close, all the measured information will be completely
drowned in a small noise level.

\section{Conclusions and Future Work \label{sec:Conclusions-and-Future}}

In a previous paper, we have established that a signal of the form
(\ref{eq:signal}) on the sphere can be recovered precisely from its
 projection onto spherical harmonics of degree $N$ using TV minimization, as long as the
distance between the spikes is inversely proportional to $N$. In
this paper, we suggested to recast the infinite dimensional TV minimization
as a semi-definite program with $\mathcal{O}(N^{4})$ variables. We
showed that Algorithm \ref{alg1} recovers the signal with high precision and that its complexity does not depend on the resolution.
 We strongly believe that this result holds in higher dimensions and for complex Dirac ensembles. Indeed, significant
parts of the proof can be easily generalized to any dimension and to complex signals. However, there
are certain technical challenges which we hope to overcome in
future work.

Furthermore, we showed that in the discrete configuration, recovery by  $\ell_1$ minimization is robust to noise, and the recovery error is proportional to the noise standard deviation and $SRF^2$. We showed experimentally that similar estimation holds for Algorithm \ref{alg1} as well.

Our algorithm can be improved in two directions. As aforementioned,
the semi-definite program was implemented using CVX on Matlab. CVX was designed
as a convenient tool for convex optimization, however, it does not
purport to be efficient. In order to speed the computation time, one
needs to design its own solver, which is beyond the scope of this paper.
Another direction is designing an algorithm, dedicated for root finding
on the sphere. 

Finally, this paper is part of an ongoing research, suggesting to
solve infinite dimensional convex optimization problems using a finite
semi-definite programs. Up to now, these method were applied to projections
onto trigonometric \cite{candes2013towards,candes2013super,bhaskar2011atomic,tang2012compressive,tang2013near,xu2013precise,chi2013compressive}
and algebraic polynomial spaces \cite{bendory2013Legendre,de2014non}
. This work showed that it can be applied to the sphere as well. An
interesting question is whether this approach can be applied to recover
a signal lying on complicate geometries from their projection onto harmonic
polynomials. We leave this for a future research. 

\section*{Acknowledgement}
We thank the referees for their valuable comments that
have significantly improved this paper.

\bibliographystyle{plain}
\bibliography{bib}

\end{document}